\author{Ralph C. Bottesch\thanks{Department of Computer Science, University of Innsbruck, Austria. This work was supported by the ERC Consolidator Grant QPROGRESS 615307 for the majority of its duration (while the author was a post-doc at CWI, Amsterdam), and by the Austrian Science Fund (FWF) project Y757 at the time of publication.}}
\title{On $\mathsf{W}[1]$-Hardness as Evidence for Intractability}
\theoremstyle{definition}
\newtheorem{theor}{Theorem}
\theoremstyle{definition}
\newtheorem{prop}[theor]{Proposition}
\theoremstyle{definition}
\newtheorem{lemma}[theor]{Lemma}
\theoremstyle{definition}
\newtheorem{corol}[theor]{Corollary}
\theoremstyle{definition}
\theoremstyle{definition}
\newtheorem{fact}[theor]{Fact}
\theoremstyle{definition}
\theoremstyle{definition}
\newtheorem{defin}[theor]{Definition}
\theoremstyle{definition}
\theoremstyle{definition}
\begin{document}
\maketitle
\begin{abstract}
The central conjecture of parameterized complexity states that $\mathsf{FPT}\neq\mathsf{W}[1]$, and is generally regarded as the parameterized counterpart to $\mathsf{P}\neq\mathsf{NP}$. We revisit the issue of the plausibility of $\mathsf{FPT}\neq\mathsf{W}[1]$, focusing on two aspects: the difficulty of proving the conjecture (assuming it holds), and how the relation between the two classes might differ from the one between $\mathsf{P}$ and $\mathsf{NP}$.

Regarding the first aspect, we give new evidence that separating $\mathsf{FPT}$ from $\mathsf{W}[1]$ would be considerably harder than doing the same for $\mathsf{P}$ and $\mathsf{NP}$. Our main result regarding the relation between $\mathsf{FPT}$ and $\mathsf{W}[1]$ states that the closure of $\mathsf{W[1]}$ under relativization with $\mathsf{FPT}$-oracles is precisely the class $\mathsf{W[P]}$, implying that either $\mathsf{FPT}$ is not low for $\mathsf{W}[1]$, or the $\mathsf{W}$-Hierarchy collapses. This theorem also has consequences for the $\mathsf{A}$-Hierarchy (a parameterized version of the Polynomial Hierarchy), namely that unless $\mathsf{W[P]}$ is a subset of some level $\mathsf{A}[t]$, there are structural differences between the $\mathsf{A}$-Hierarchy and the Polynomial Hierarchy. We also prove that under the unlikely assumption that $\mathsf{W[P]}$ collapses to $\mathsf{W}[1]$ in a specific way, the collapse of any two consecutive levels of the $\mathsf{A}$-Hierarchy implies the collapse of the entire hierarchy to a finite level; this extends a result of Chen, Flum, and Grohe (2005).

Finally, we give weak (oracle-based) evidence that the inclusion $\mathsf{W}[t]\subseteq\mathsf{A}[t]$ is strict for $t>1$, and that the $\mathsf{W}$-Hierarchy is proper. The latter result answers a question of Downey and Fellows (1993).
\end{abstract}

\section{Introduction}
The central conjecture of parameterized complexity theory states that $\mathsf{FPT}\neq \mathsf{W}[1]$. The complexity class $\mathsf{FPT}$ is a generalization of $\mathsf{P}$, and it also contains this class in the sense that regardless of which parameter we associate with the instances of a problem in $\mathsf{P}$, the resulting \emph{parameterized problem} is in $\mathsf{FPT}$. This inclusion is strict, as $\mathsf{FPT}$ also contains parameterized versions of problems that are provably not in $\mathsf{P}$. The class $\mathsf{W}[1]$ can be regarded as a parameterized counterpart to $\mathsf{NP}$. It can be defined in different ways, all of them quite technical, but the most common definition is in terms of a parameterized version of a particular $\mathsf{NP}$-complete problem (much like $\mathsf{NP}$ can be defined in terms of a Boolean circuit satisfiability problem). However, $\mathsf{W}[1]$ is not known or believed to contain \emph{all} parameterized versions of problems in $\mathsf{NP}$, and by defining complexity classes in terms of parameterizations of other $\mathsf{NP}$-complete problems, one actually obtains a large set of seemingly distinct parameterized analogues of $\mathsf{NP}$, some of which we list here:
\[
\mathsf{W}[1](=\mathsf{A}[1])\subseteq\mathsf{W}[2]\subseteq\ldots\mathsf{W}[t]\ldots\subseteq\mathsf{W[P]}\subseteq\mathsf{para\textrm{-}NP}.
\]

Among these, the most interesting classes are $\mathsf{W}[1]$ (a.k.a.\ $\mathsf{A}[1]$) and $\mathsf{W[P]}$, due to having many natural complete problems.

The basic intuition for why $\mathsf{W}[1]$ (and hence all classes in the above sequence) should differ from $\mathsf{FPT}$ is the same as for $\mathsf{P}\neq\mathsf{NP}$, namely that we do not know of any way to efficiently simulate nondeterministic computations deterministically. This intuition is often used to justify considering the $\mathsf{W}[1]$-hardness of a problem as evidence for its intractability. But because $\mathsf{FPT}$ is strictly larger than $\mathsf{P}$, while $\mathsf{W}[1]$ does not appear to capture all of the complexity of $\mathsf{NP}$, it seems that proving the central conjecture of parameterized complexity theory may be harder than separating $\mathsf{P}$ and $\mathsf{NP}$. We investigate qualitative differences between the two conjectures, as well as the more general question of whether $\mathsf{FPT}$ occupies the same place within $\mathsf{W}[1]$ as $\mathsf{P}$ does within $\mathsf{NP}$. We start by giving a brief summary of some relevant prior results.

That the central parameterized conjecture is at least as strong as its classical counterpart is easy to prove: If $\mathsf{NP}=\mathsf{P}$, then, as noted above, every parameterized version of every problem in $\mathsf{NP}$($=\mathsf{P}$) must be in $\mathsf{FPT}$, hence $\mathsf{W}[1]=\mathsf{FPT}$ (and, in fact, $\mathsf{para\textrm{-}NP}=\mathsf{W[P]}=\ldots=\mathsf{FPT}$). Thus we have that $\mathsf{FPT}\neq\mathsf{W}[1]\Rightarrow \mathsf{P}\neq\mathsf{NP}$. The converse of this implication is not known to hold, but Downey and Fellows \cite{df1} were the first to observe that a collapse of $\mathsf{W}[1]$ to $\mathsf{FPT}$ would at least imply the existence algorithms with sub-exponential running time for the $\mathsf{NP}$-complete problem $\textsc{3Sat}$. This would contradict the \emph{Exponential Time Hypothesis (ETH)}, first introduced by Imagliazzo, Paturi, and Zane \cite{ipz}, which states that for some constant $c>0$, \textsc{3Sat} can not be solved in time $O^\ast(2^{cn})$ by deterministic Turing machines (TMs). This conjecture has enjoyed much popularity recently, because, assuming ETH, for many problems it is possible to prove a complexity lower bound that matches that of the best known algorithm up to lower-order factors (see \cite{lokmarsau} for a survey of such results). Nevertheless, one should keep in mind that ETH is a much stronger statement than $\mathsf{P}\neq\mathsf{NP}$, since it rules out not only the existence of polynomial-time algorithms for $\textsc{3Sat}$, but also of those that run in up to exponential-time (for some bases). Putting all of these facts together, we have:
\[
\textrm{ETH}\Longrightarrow\mathsf{FPT}\neq\mathsf{W}[1]\Longrightarrow\ldots\Longrightarrow\mathsf{FPT}\neq\mathsf{W[P]}\Longrightarrow\mathsf{FPT}\neq\mathsf{para\textrm{-}NP}\Longrightarrow\mathsf{P}\neq\mathsf{NP}.
\]
The above sequence relates parameterized complexity conjectures to two classical ones, but it does not say which of them are closer in strength to ETH and which are closer to $\mathsf{P}\neq\mathsf{NP}$. The only known fact here is that $\mathsf{FPT}\neq\mathsf{para\textrm{-}NP}\Leftrightarrow \mathsf{P}\neq\mathsf{NP}$ (see \cite[Corollary 2.13]{fg}), but there is strong evidence suggesting that all of the other parameterized conjectures listed above are considerably stronger than $\mathsf{P}\neq\mathsf{NP}$ (although possibly still weaker than ETH\footnote{There is, in fact, a subclass of $\mathsf{W}[1]$, called $\mathsf{M}[1]$, of which it is known that $\mathsf{FPT}\neq \mathsf{M}[1]$ is equivalent to $\textrm{ETH}$ (see \cite{df2}). The similarities between $\mathsf{M}[1]$ and $\mathsf{W}[1]$ can be seen as a further indication that the conjecture $\mathsf{FPT}\neq \mathsf{W}[1]$ is nearly as strong as ETH, but, evidently, both $\mathsf{FPT}\neq \mathsf{M}[1]$ and $\mathsf{M}[1]\neq\mathsf{W}[1]$ are wide open conjectures.}). First, Downey and Fellows \cite{df92} construct an oracle relative to which $\mathsf{P}$ and $\mathsf{NP}$ differ while $\mathsf{W[P]}$ collapses to $\mathsf{FPT}$, so we know that any proof of the implication $\mathsf{P}\neq\mathsf{NP}\Rightarrow\mathsf{FPT}\neq\mathsf{W[P]}$ can not be as simple as the proof of the converse implication sketched above. More importantly, $\mathsf{FPT}\neq\mathsf{W[P]}$ can be related much more precisely to other classical complexity conjectures.

How strong the assumption $\mathsf{FPT}\neq\mathsf{W[P]}$ is, can be elegantly expressed in terms of \emph{limited nondeterminism}. If $f$ is a poly-time-computable function, denote by $\mathsf{NP}[f(n)]$ the class of problems that can be solved by a nondeterministic TM in polynomial-time by using at most $O(f(n))$ bits of nondeterminism ($n$ denotes the size of the input). Note that $\mathsf{NP}[\log n]=\mathsf{P}$, since a deterministic TM can cycle through all possible certificates of length $O(\log n)$ in polynomial-time. A remarkable theorem of Cai, Chen, Downey, and Fellows \cite{caichen95} states that $\mathsf{FPT}\neq\mathsf{W[P]}$ holds \emph{if and only if} for \emph{every} poly-time-computable non-decreasing unbounded function $h$, we have that $\mathsf{P}\neq\mathsf{NP}[h(n)\log n]$ (see \cite[Theorem 3.29]{fg} for a proof of the theorem in this form). The class of functions referred to in this theorem contains functions with very slow growth, such as the iterated logarithm function, $\log^\ast$. In fact, there is no poly-time-computable non-decreasing unbounded function that has the slowest growth, because if some function $h$ satisfies these conditions, then so does $\log^\ast h$. It is not even intuitively clear whether $\mathsf{P}$ is different from $\mathsf{NP}$ when the amount of allowed nondeterminism is arbitrarily close to trivial. At the very least, the fact that an infinite number of increasingly strong separations must hold in order for $\mathsf{W[P]}$ to not collapse to $\mathsf{FPT}$, suggests that separating these two classes is much farther out of our reach than a separation of $\mathsf{P}$ and $\mathsf{NP}$.

The evidence we have seen so far indicates that proving a separation of $\mathsf{W}[1]$ and $\mathsf{FPT}$ may be harder than proving $\mathsf{P}\neq\mathsf{NP}$. But assuming that both conjectures hold, it is meaningful to ask whether the internal structure of $\mathsf{W}[1]$ resembles that of $\mathsf{NP}$, and there is indeed some positive evidence in this direction. For example, a parameterized version of Cook's Theorem connects Boolean circuit satisfiability to $\mathsf{W}[1]$-completeness (see \cite{df1}), a parameterized version of Ladner's Theorem states that if $\mathsf{FPT}\neq\mathsf{W}[1]$, then there is an infinite hierarchy of problems with different complexities within $\mathsf{W}[1]$ (see \cite{df1}), and the machine-based characterizations of this class, due to Chen, Flum, and Grohe \cite{cfg3}, establish that $\mathsf{W}[1]$ can indeed be defined in terms of nondeterministic computing machines. Nevertheless, there are also previously unexplored ways in which $\mathsf{W}[1]$ may not behave the same way as $\mathsf{NP}$.

Our main goal in this work is to provide further evidence that the classes $\mathsf{FPT}$ and $\mathsf{W}[1]$ are close not only in the sense of being difficult to separate, but also in the sense that the relationship between the two differs from that of $\mathsf{P}$ and $\mathsf{NP}$, in a way that indicates that $\mathsf{FPT}$ is larger within $\mathsf{W}[1]$ than $\mathsf{P}$ is within $\mathsf{NP}$ (assuming the latter pair does not collapse). These results contrast with those in \cite{rcb17}, where we showed how certain theorems about $\mathsf{FPT}$ and the levels of the $\mathsf{A}$-Hierarchy can be proved in the same way as for their classical counterparts.

\subsection{Summary of our results}
\textbf{The difficulty of separating $\mathsf{W[P]}$ from $\mathsf{FPT}$.} Assuming that we could prove a separation of the form $\mathsf{P}\neq\mathsf{NP}[h(n)\log n]$ for a particular, slow-growing function $h$, how much progress would we have made towards proving the separation where $h(n)$ is replaced by $\log h(n)$? Intuitively, the difficulty of proving non-equality should increase when a function with a slower growth is chosen. On the other hand, if $\mathsf{FPT}\neq\mathsf{W[P]}$ holds, then all such classical separations hold as well (by the above-mentioned theorem of Cai \emph{et al.} \cite{caichen95}), and therefore any one of them implies the others. It is not clear, however, whether a proof of $\mathsf{FPT}\neq\mathsf{W[P]}$ with $\mathsf{P}\neq\mathsf{NP}[h(n)\log n]$ as a hypothesis would be significantly simpler than a proof from scratch. We show that this is unlikely to be the case, by proving (Theorem \ref{wp_fpt_thm}) that for any poly-time-computable non-decreasing unbounded function $h$, there exists a computable oracle $O_h$ such that:
\[
\mathsf{P}^{O_h}\neq\mathsf{NP}[h(n)\log n]^{O_h}\textrm{, but }\mathsf{W[P]}^{O_h}=\mathsf{FPT}^{O_h}.
\]
Theorem \ref{wp_fpt_thm} is an improvement over the above-mentioned oracle construction of Downey and Fellows \cite{df92}\footnote{Actually, Downey and Fellows \cite{df92} use a different computational model to define and relativize $\mathsf{W[P]}$, so the two results, although in the same spirit, may not be directly comparable at a technical level.}. It is weak as a barrier result, since the relativization barrier has been repeatedly overcome in the last three decades, but nevertheless the theorem succinctly expresses how much harder the conjecture $\mathsf{FPT}\neq\mathsf{W[P]}$ is compared to classical questions regarding nondeterministic vs.\ deterministic computation: No matter how small the amount of nondeterminism that provably yields a class strictly containing $\mathsf{P}$, we will always be a non-trivial proof step away from separating $\mathsf{W[P]}$ (or $\mathsf{W}[1]$) from $\mathsf{FPT}$.\\

\noindent\textbf{The structure of $\mathsf{W}[1]$ and its relation to $\mathsf{FPT}$.}
The class $\mathsf{A}[1]$(=$\mathsf{W}[1]$)\footnote{$\mathsf{W}[1]$ and $\mathsf{A}[1]$ coincide as complexity classes, but in \cite{cfg3}, Chen, Flum, and Grohe give two machine-based characterizations, one which can be generalized to get the levels of the $\mathsf{W}$-Hierarchy, and one which generalizes to the levels of the $\mathsf{A}$-Hierarchy. The machine model for $\mathsf{A}[1]$ is easier to handle when working with oracles, so we typically use this model when relativizing this class, and write ``$\mathsf{A}[1]$'' to emphasize this fact. However, oracle $\mathsf{W}[1]$-machines can also be defined so that our theorems hold for this model as well (see Section \ref{wp_fpt_sec}).} can be characterized in terms of random access machines that perform \emph{tail-nondeterministic} computations \cite{cfg3}. Such computations consist of two phases: 1.\ a (deterministic) $\mathsf{FPT}$-computation; 2.\ a short nondeterministic computation that can use any data computed in phase 1. Tail-nondeterministic machines that perform only the second phase of the computation (without a longer deterministic computation preceding it), can not solve every problem in $\mathsf{FPT}$, but, paradoxically, they can solve many problems that are complete for $\mathsf{A}[1]$ (we give an example in Section \ref{fpt_w1_sec}). As we will see, this simple observation has important consequences for the structure of this class.

A first consequence is that giving an $\mathsf{A}[1]$-machine even very restricted oracle access to a tractable ($\mathsf{FPT}$) problem, may increase its computational power, because then the use of nondeterminism can be combined with the ability to solve instances of an $\mathsf{FPT}$-problem via the oracle. Thus, $\mathsf{FPT}$-computations appear to constitute a non-trivial computational resource for $\mathsf{A}[1]$ (unlike $\mathsf{P}$-computations for $\mathsf{NP}$). Somewhat suprisingly, we can actually identify the complexity class resulting from endowing $\mathsf{A}[1]$ with $\mathsf{FPT}$-oracles, if a suitable, highly restricted type of oracle access is used. We have (Theorem \ref{fpt_low_thm}, Corollary \ref{fpt_low_cor}) that:
\[
\mathsf{A}[1]^{\mathsf{FPT}}=\mathsf{W[P]}\textrm{\ \ and\ \ }\forall t\geq 1:\mathsf{W}[t]^{\mathsf{FPT}}=\mathsf{W[P]},
\]
where we used the common notation $\mathcal{C}_1^{\mathcal{C}_2}:=\bigcup_{Q\in\mathcal{C}_2}\mathcal{C}_{1}^{Q}$. This means that either $\mathsf{W[P]}=\mathsf{W[1]}$, in which case $\mathsf{W[P]}$ is smaller than generally believed, or $\mathsf{FPT}$ is larger within $\mathsf{W}[1]$ than $\mathsf{P}$ is within $\mathsf{NP}$.

Putting the known and new facts together, Theorem \ref{wp_fpt_thm} and the result of Cai \emph{et.\ al.}\ \cite{caichen95} mentioned in the introduction indicate that $\mathsf{W[P]}$ is likely to be closer to $\mathsf{FPT}$ than any class $\mathsf{NP}[h(n)\log n]$ is to $\mathsf{P}$ (see Figure \ref{fig1}). The case for this figure being accurate is further strengthened by Theorem \ref{fpt_low_thm} and Corollary \ref{fpt_low_cor}, which exhibit another way in which at least two of the classes $\mathsf{FPT}$, $\mathsf{W}[1]$, and $\mathsf{W[P]}$ are close.

\begin{figure}[!ht]
  \begin{center}
    \begin{tikzpicture}[scale=0.95]
      \draw[thick] (0, 0) circle (1) node {$\mathsf{P}$};
      \draw[thick] (0, 1) ellipse (1.3 and 2);
      \node at (0, 2)  {$\mathsf{NP}[(\log n)^2]$};
      \draw[thick] (0, 4) ellipse (2.0 and 5);
      \node at (0, 8.5) {$\mathsf{NP}$};
      \draw[thick] (5, 0) circle (1) node {$\mathsf{FPT}$};
      \draw[thick] (5, 4) ellipse (2.0 and 5);
      \node at (5, 8.5) {$\mathsf{para\textrm{-}NP}$};
      \draw[thick] (5, 0.025) circle (1.025);
      \draw[dashed] (4.6, 0.7) rectangle (5.5, 1.3);
      \draw[very thin,dashed] (5, 1.3) -- (8, 5.5);
      \draw[very thin,dashed] (5.5, 1) -- (10, 4);
      \begin{scope}
        \draw[dashed] (8, 4) rectangle (12, 7);
        \clip (8, 4) rectangle (12, 7);
        \draw (10, -5) circle (9.4);
        \draw (10, -5.2) circle (9.8);
        \draw (10, -5.4) circle (10.2);
        \draw (10, -7.4) circle (14.2);
        \node at (10, 4.2) {$\mathsf{FPT}$};
        \node at (10, 6.5) {$\mathsf{W[P]}$};
        \node[thick] at (10, 5.2) {$\vdots$};
      \end{scope}
      \draw[thin, dotted] (11, 4.45) -- (11, 3.75);
      \node at (11, 3.5) {$\mathsf{W}[1]$};
      \draw[thin, dotted] (11.75, 4.6) -- (11.75, 3.35);
      \node at (11.75, 3.1) {$\mathsf{W}[2]$};
    \end{tikzpicture}
    \caption{The mutual closeness of the parameterized complexity classes, compared to that of their classical analogues, as suggested by \cite{caichen95, df92}, Theorems \ref{wp_fpt_thm} and \ref{fpt_low_thm}, and Corollary \ref{fpt_low_cor}. Regardless of which class $\mathsf{NP}[h(n)\log n]$ we choose to represent between $\mathsf{P}$ and $\mathsf{NP}$ on the left side (whether it is $\mathsf{NP}[(\log n)^2]$  as in the picture, $\mathsf{NP}[\log^\ast n \log n]$, or something even smaller), it will be much larger compared to $\mathsf{P}$ than $\mathsf{W[P]}$ is compared to $\mathsf{FPT}$.}\label{fig1}
  \end{center}
\end{figure}
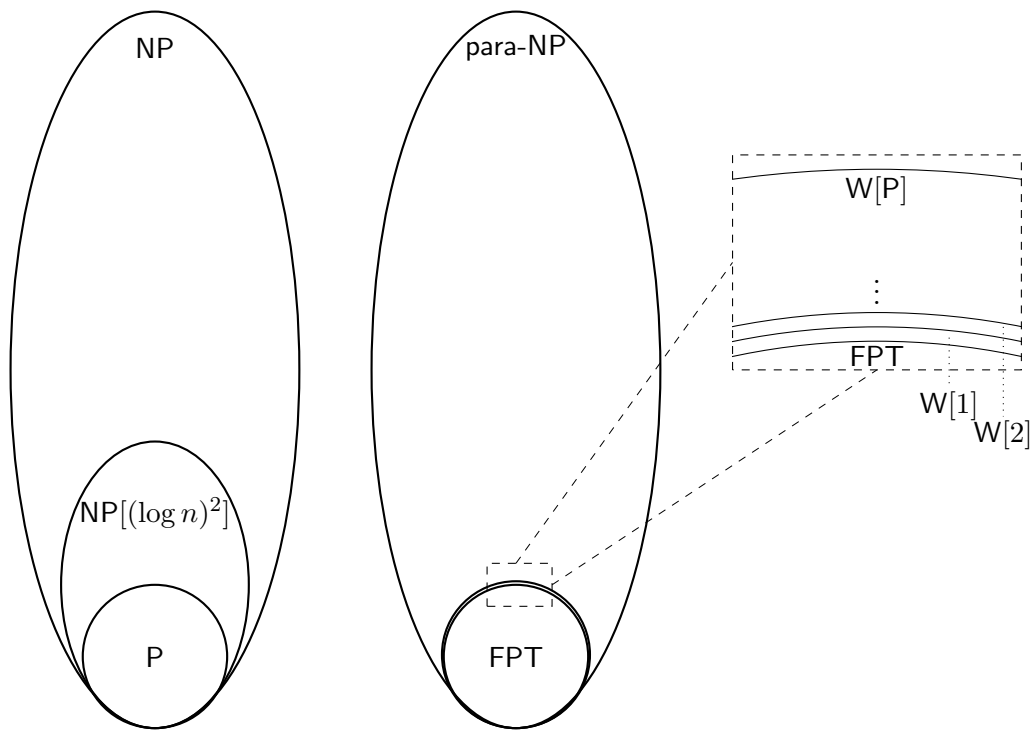

Theorem \ref{fpt_low_thm} and the observation preceding it also have consequences for the $\mathsf{A}$-Hierarchy, which is a parameterized analogue of $\mathsf{PH}$. Although they share some essential properties \cite{cfg3,rcb17}, a corollary of Theorem \ref{fpt_low_thm} is that, unless some unlikely inclusions between complexity classes occur, the two hierarchies have structural differences that indicate that consecutive levels of the $\mathsf{A}$-Hierarchy are closer to each other than the corresponding levels of $\mathsf{PH}$ (see Section \ref{fpt_w1_sec}). Conversely, using a similar idea as in the proof of Theorem \ref{fpt_low_thm}, we can show (Theorem \ref{ah_collapse_thm}) that if $\mathsf{W[P]}$ were to collapse to $\mathsf{W}[1]$ in a specific way, we would get a \emph{downward separation} theorem for the $\mathsf{A}$-Hierarchy (i.e., that if two levels collapse, the entire hierarchy collapses to the smaller of the two). Proving such a theorem for the $\mathsf{A}$-Hierarchy has been a long-standing open problem in parameterized complexity theory, and although our theorem falls short of this goal (since it requires an unlikely collapse to occur), it marks the first progress on this front in over a decade (since \cite{cfg3}).\\

\noindent\textbf{Level-by-level relativized separations of the $\mathsf{W}$- and the $\mathsf{A}$-Hierarchy.} We also give some evidence that certain collapses do not occur. The only relations that are known to hold between the classes $\mathsf{W}[t]$ and $\mathsf{A}[t]$ are that $\mathsf{W}[1]=\mathsf{A}[1]$ and that $\mathsf{W}[t]\subseteq\mathsf{A}[t]$ for $t\geq 2$. We show that in a relativized setting, the known inclusions can be made strict and some unexpected inclusions can be ruled out.

Separations of complexity classes relative to oracles count only as very weak evidence that the unrelativized versions of the classes are distinct, due to the fact that such oracles can in some cases be constructed even when two classes coincide (the most famous example being $\mathsf{IP}=\mathsf{PSPACE}$ \cite{sham} -- see \cite{fortsip} for an oracle separating the two). Nevertheless, there are a few reasons why level-by-level relativized separations for the $\mathsf{W}$- and the $\mathsf{A}$-Hierarchy are interesting: First, since it is generally assumed that these hierarchies are proper and distinct, we should expect to have at least this weak form of evidence supporting the assumption. Second, we have seen a number of results which suggest that the levels of these hierarchies are in various ways close to each other, so proving even relativized separations between them may be non-trivial. Finally, relativization in the parameterized setting is still mostly unexplored, and although the proofs of the following theorems rely on standard diagonalization arguments, the details of the machine models and how they are allowed to access oracles require special care in order to make the arguments work.

In Section \ref{wh_ah_sec} we show (Theorem \ref{wh_ah_sep_thm}, Corollary \ref{wh_ah_sep_cor}) that there exists a computable parameterized oracle $O$ such that
\[
\forall t\geq 2: \mathsf{A}[t]^O\not\subset\mathsf{W}[t]^O.
\]
Note that this is a \emph{single} oracle relative to which all inclusions are simultaneously made strict. Also note that, although we use machine-based characterizations of classes $\mathsf{A}[t]$ and $\mathsf{W}[t]$ which result in distinct characterizations of the class $\mathsf{A}[1]=\mathsf{W}[1]$, fortunately, this oracle does not appear to separate $\mathsf{A}[1]$ from $\mathsf{W}[1]$. Such a separation would have suggested that the strict inclusions are mere artifacts of the machine models used.

Finally, we give evidence which suggests that the $\mathsf{W}$-Hierarchy is not contained within any finite level of the $\mathsf{A}$-Hierarchy (Theorem \ref{wh_ah_sep_thm_2}): For all $t\geq 1$, there exists a computable parameterized oracle $O_t$ such that
\[
\mathsf{W}[t+1]^{O_t}\not\subset \mathsf{A}[t]^{O_t}.
\]
Since it holds that $\mathsf{W}[t]^{O_t}\subseteq\mathsf{A}[t]^{O_t}$, each oracle $O_t$ separates two consecutive levels of the $\mathsf{W}$-Hierarchy. This answers a question of Downey and Fellows \cite{df92}, although we do not have a single oracle that simultaneously separates the entire hierarchy.

\section{Preliminaries}
We assume familiarity with standard facts and notations from both classical and parameterized complexity theory, and refer to \cite{ab} and to \cite{fg} for the necessary background in the respective branches. Since the characterizations of various parameterized complexity classes in terms of computing machines \cite{cfg3} are less well known, we give a brief overview of the main definitions.

Many parameterized complexity classes can only be naturally characterized in terms of \emph{random access machines (RAMs)}, which can store entire integers in each of their registers, perform the operations addition, subtraction, and division by 2 on integers in unit time, and can access any part of their memory in constant time (see \cite{papadim} or the introduction of \cite{cfg3}). The input of a RAM can be a sequence of non-negative integers, and we allow the instances of problems to be encoded in this way whenever we are working only with RAMs (as opposed to TMs). Since the size of a sequence of non-negative integers is calculated as the sum of the length of the binary representations of the individual numbers, RAMs have no significant computational advantage over TMs \cite[Theorem 2.5]{papadim}. However, this encoding does make a difference when considering oracle RAMs, because the query instances will also be encoded in this fashion.

We give two examples of definitions of complexity classes in terms of RAMs. It is not difficult to see that these are equivalent to the standard (TM-based) definitions (see \cite{fg}). Note that we use the Downey-Fellows definition of parameterized problems \cite{df1}, where the parameter value, encoded in unary, is given together with the input.

\begin{defin}
Let $Q$ be a parameterized problem. We say that $Q\in\mathsf{FPT}$ if and only if there exists a RAM $M$, a computable function $f$, and a constant $c\geq 0$, such that for every input $(x,k)$ with $x\in\mathbb{N}^\ast$ and $k\geq 0$, $M$ runs in time $f(k)(|x|+k)^c$ and accepts if $(x,k)\in Q$, otherwise it rejects. The class $\mathsf{para\textrm{-}NP}$ is defined similarly, except with RAMs which can nondeterministically guess, in unit time, positive integers of size upper-bounded by the $f(k)(|x|+k)^c$ (the bound on the running time).
\end{defin}

We also collect several useful definitions and notations in the following:
\begin{defin}
Let $\mathcal{C}_1,\mathcal{C}_2$ be complexity classes, where $\mathcal{C}_1$ is defined in terms of computing machines that can be given access to an oracle, and let $P_0,P_1\subseteq\{0,1\}^\ast$ be classical languages. We define:
$\mathcal{C}_1^{\mathcal{C}_2}:=\bigcup_{P\in\mathcal{C}_2}
\mathcal{C}_1^{P}$ and $P_0\oplus P_1:=\{0x\mid x\in P_0\}\cup\{1x\mid x\in P_1\}$. We say that $P_0$ is \emph{low for $\mathcal{C}_1$} if $\mathcal{C}_1^{P_0}=\mathcal{C}_1$, and we say that $\mathcal{C}_2$ is \emph{low for $\mathcal{C}_1$} if $\mathcal{C}_1^{\mathcal{C}_2}=\mathcal{C}_1$.
\end{defin}

\subsection{The $\mathsf{A}$-Hierarchy and the $\mathsf{W}$-Hierarchy}\label{prelim_sec1}
The following classes are defined in terms of \emph{alternating random access machines (ARAMs)}, which are RAMs that can nondeterministically guess, in unit time, integers of size bounded by the running time of the machine on a given input, either in the existential or the universal mode (see \cite{cfg3}).
\begin{defin}[\cite{cfg3}]\label{At_def}
For each $t\geq 1$, let $\mathsf{A}[t]$ be the class of parameterized problems that are solved by some ARAM $A$ which, for some computable functions $f$ and $h$, and a constant $c\geq 0$, satisfies the following conditions on every input $(x,k)$:
\begin{itemize}
\item[1.] $A$ runs in time at most $f(k)(|x|+k)^c$;
\item[2.] throughout the computation, the values in $A$'s registers do not exceed $f(k)(|x|+k)^c$;
\item[3.] all nondeterministic guesses are made during the last $h(k)$ steps of the computation;
\item[4.] the first nondeterministic guess is existential and the machine alternates at most $t-1$ times between existential and universal guesses.
\end{itemize}
The class $\mathsf{co\textrm{-}A}[1]$ is defined in terms of ARAMs which satisfy conditions 1--3, but only make universal nondeterministic guesses (one can verify, just as in the classical setting, that a problem is in $\mathsf{co\textrm{-}A}[1]$ if and only if it is the complement of a problem in $\mathsf{A}[1]$). ARAMs satisfying conditions 1 and 2 are called \emph{parameter-bounded} in \cite{rcb17}, those satisfying conditions 3 and 4 are called, respectively, \emph{tail-nondeterministic} and \emph{$t$-alternating} \cite{cfg3}.
\end{defin}

The classes $\mathsf{W}[t]$ ($t\geq 1$) can be defined in terms of $\mathsf{A}[t]$-machines (parameter-bounded tail-nondeterministic $t$-alternating ARAMs) that are further restricted so that: 1. Every block of nondeterministic guess instructions of the same kind, except the first one, is made up of at most $c'$ guess instructions, where $c'$ is a constant that is independent of the input. 2. All nondeterministically guessed integers are placed in a special set of \emph{guess registers}, which can not be read from directly, and can only be accessed via special instructions that use the guessed values as indices for accessing standard registers. We will not need further details regarding these machines, and therefore refer the reader to \cite{cfg3} or \cite{bussislam06} for more complete definitions. We will, however, define oracle $\mathsf{W}[t]$-machines (Definition \ref{wt_defin}).

\begin{defin}[\cite{rcb17}]\label{oracle_def1}
An \emph{oracle (A)RAM} is a machine with an additional set of registers called \emph{oracle registers}, instructions that allow the machine to copy values from its standard registers to the oracle registers, as well as a QUERY instruction, the execution of which results in one of the values 1 or 0 being placed into the first standard register of the machine, depending on whether the instance encoded in the oracle registers at that time constitute a `yes'- or a 'no'-instance of a problem for which the machine is said to have an oracle.

An oracle (A)RAM has \emph{balanced oracle access} to a parameterized oracle, if there is a computable function $g$ such that on every input $(x,k)$, the machine queries the oracle only with instances whose parameter value is $\leq g(k)$ (in other words, the parameter values of the instances for which the oracle is called should be upper-bounded by some function of $k$, but may not depend on $n$, even though the machine may have time to construct such a query instance). An oracle (A)RAM has \emph{tail-restricted oracle access}, if its access to the oracle is balanced and, furthermore, there is a computable function $h$ such that the machine makes oracle queries only within the last $h(k)$ steps of the computation on input $(x,k)$. Note that tail-restricted access is also balanced.

For a parameterized complexity class $\mathcal{C}$ that is defined in terms of (A)RAMs, we write $\mathcal{C}(O)$ if $\mathcal{C}$ has unrestricted access to the oracle $O$, $\mathcal{C}(O)_{bal}$ if it has balanced access, and $\mathcal{C}(O)_{tail}$ if it has tail-restricted access. If $\mathcal{C}$ is defined in terms of tail-nondeterministic ARAMs, we also write $\mathcal{C}^O$ instead of $\mathcal{C}(O)_{tail}$ (so $\mathsf{A}[1]^O$ means $\mathsf{A}[1]$ with tail-restricted access to $O$). Note that for $\mathcal{C}(O)$, the oracle can be either classical or parameterized, but for balanced or more restricted oracle access, it must be parameterized.
\end{defin}

We define oracle access for $\mathsf{W}[t]$-machines in Section \ref{wp_fpt_sec}.
\subsection{$\mathsf{W[P]}$ and the $\mathsf{W[P]}$-Hierarchy}
We define $\mathsf{W[P]}$ both in terms of TMs and in terms of RAMs, and use both definitions at different points in the paper.
\begin{defin}[\cite{cfg3}]\label{wp_def}
Let $Q$ be a parameterized problem. We say that $Q\in\mathsf{W[P]}$ if and only if there exists a nondeterministic TM $M$, computable functions $f$ and $h$, and a constant $c\geq 0$, such that for any input $(x,k)$ with $x\in\{0,1,\#\}^\ast$ and $k\geq 0$, $M$ runs in time $f(k)(|x|+k)^c$, uses at most $h(k)\lceil\log(|x|+k)\rceil$ nondeterministic bits, and accepts if and only if $(x,k)\in Q$. 
\end{defin}

The following problem is complete for $\mathsf{W[P]}$ under fpt-reductions.
\begin{center}
\fbox{
\begin{minipage}{12.5cm}
$p\textsc{-WSatCircuit}$\\
\begin{tabular}{ r l }
Input: & \parbox[t]{9cm}{A Boolean circuit $C$ with $n$ input bits, $k\in\mathbb{N}$.}\\
Parameter: & $k$\\
Problem: & Decide whether $C$ has a satisfying assignment of weight $k$.
\end{tabular}
\end{minipage}}
\end{center}

The class $\mathsf{W[P]}$ can also be defined in terms of RAMs \cite{cfg3}. One can also define a hierarchy that is similar to $\mathsf{PH}$, except in terms of alternating nondeterminism that matches the nondeterminism of $\mathsf{W[P]}$.

\begin{defin}[\cite{rcb17}]\label{wph_def}
For each $t\geq 1$, let $\mathsf{\Sigma}_t^{[P]}$ be the class of parameterized problems that are solved by some ARAM $A$ which, for some computable functions $f$ and $h$, and a constant $c\geq 0$, satisfies, on every input $(x,k)$, conditions 1, 2, and 4 from Definition \ref{At_def}, as well as:
\begin{itemize}
\item[3'.] $A$ nondeterministically guesses at most $h(k)$ numbers throughout the computation.
\end{itemize}
We denote the class $\bigcup_{t=1}^{\infty}\mathsf{\Sigma}_t^{[P]}$ by $\mathsf{W[P]H}$, the \emph{$\mathsf{W[P]}$-Hierarchy}.
\end{defin}

It is not difficult to see that $\mathsf{W[P]}=\mathsf{\Sigma}^{[P]}_1$. Note that we will use the term ``$\mathsf{W[P]}$-machine'' to designate both the TMs from Definition \ref{wp_def} as well as the $\mathsf{\Sigma}^{[P]}_1$-machines from Definition \ref{wph_def} (which are RAMs), but it should be clear from the context which type of machine is meant.

For each $t\geq 1$, the following generalizations of the problem $p\textsc{-WSatCircuit}$ can easily be seen to be, respectively, complete problems for $\mathsf{\Sigma}^{[P]}_t$.

\begin{center}
\fbox{
\begin{minipage}{12cm}
$p\textsc{-AWSatCircuit}_t$\\
\begin{tabular}{ r l }
Input: & \parbox[t]{9cm}{A Boolean circuit $C$ with $n$ input bits, $k\in\mathbb{N}$, and a partition of the input variables of $C$ into $t$ sets $I_1,\ldots,I_t$.}\\
Parameter: & $k$\\
Problem: & \parbox[t]{9cm}{Decide whether there exists a set $J_1\subseteq I_1$ of size $k$ such that for all subsets $J_2\subseteq I_2$ of size $k$ there exists ... such that setting precisely the variables in $J_1\cup\ldots\cup J_t$ to `true' results in a satisfying assignment of $C$.}
\end{tabular}
\end{minipage}}
\end{center}

As in the case of the Polynomial Hierarchy, and unlike the case of the $\mathsf{A}$-Hierarchy, it is known that the collapse of levels of the $\mathsf{W[P]}$-Hierarchy would propagate upward:
\begin{fact}[Corollary 17 of \cite{rcb17}]\label{wph_collapse_fact}
If for any $t\geq 1$, $\mathsf{\Sigma}^{[P]}_{t+1}=\mathsf{\Sigma}^{[P]}_{t}$, then $\mathsf{W[P]H}=\mathsf{\Sigma}^{[P]}_{t}$.
\end{fact}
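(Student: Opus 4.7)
The plan is to mirror the classical argument showing that a collapse of two consecutive levels of $\mathsf{PH}$ propagates upward, adapting it to the RAM-based, parameter-bounded machine model of Definition \ref{wph_def}. The main target is to show $\mathsf{\Sigma}^{[P]}_{t+2}\subseteq\mathsf{\Sigma}^{[P]}_{t}$; once that implication is established, the full collapse $\mathsf{W[P]H}=\mathsf{\Sigma}^{[P]}_t$ follows by a straightforward induction on the number of alternations.

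First, I would introduce the class $\mathsf{\Pi}^{[P]}_t$, defined by the same conditions as $\mathsf{\Sigma}^{[P]}_t$ except that the first nondeterministic block is universal. Because the machines in question have a fixed alternation pattern and parameter-bounded running time, $\mathsf{\Pi}^{[P]}_t$ is just the class of complements of $\mathsf{\Sigma}^{[P]}_t$-problems; hence the hypothesis $\mathsf{\Sigma}^{[P]}_{t+1}=\mathsf{\Sigma}^{[P]}_t$ yields $\mathsf{\Pi}^{[P]}_{t+1}=\mathsf{\Pi}^{[P]}_t$ at no extra cost.

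Next, take any $L\in\mathsf{\Sigma}^{[P]}_{t+2}$, witnessed by an ARAM $M$ with parameter bound $f(k)(|x|+k)^c$, total guess count $\leq h(k)$, and $t+1$ alternations starting with $\exists$. I would split $M$'s computation into (i) its first existential guess block (at most $h(k)$ integers, each of size at most $f(k)(|x|+k)^c$) and (ii) the remainder, which, once the guesses from (i) are treated as part of the input, satisfies all the conditions of a $\mathsf{\Pi}^{[P]}_{t+1}$-machine. This defines a parameterized auxiliary problem $R:=\{((x,\vec y),k)\mid \text{(ii) accepts when started with first-block guesses }\vec y\}$; since the binary length of $\vec y$ is polynomial in $|x|$ up to a $k$-dependent factor, $R\in\mathsf{\Pi}^{[P]}_{t+1}$. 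By the hypothesis, $R\in\mathsf{\Pi}^{[P]}_{t}$, witnessed by some ARAM $M'$ whose alternation pattern is $\forall\exists\forall\cdots$ of length $t$. Prepending the existential block from (i) to $M'$ now yields a $\mathsf{\Sigma}^{[P]}_{t+1}$-machine for $L$: the total guess count is a sum of two $k$-dependent bounds, the alternation pattern starts with $\exists$ with exactly $t$ switches, and the register-value bound remains of the form $f'(k)(|x|+k)^{c'}$. Applying the hypothesis once more gives $L\in\mathsf{\Sigma}^{[P]}_{t}$, which closes the induction.

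The main obstacle I expect is the bookkeeping in the splitting and recombination, rather than any conceptual difficulty. One has to verify that the first existential guess block can be cleanly factored out (reordering or pre-scheduling any deterministic instructions interleaved with the guesses), that the encoding of $\vec y$ into the input of $R$ does not violate the parameter-bounded condition (here it is crucial that integers appearing in registers are bounded in terms of $f(k)$ and $(|x|+k)^c$, so their total written length is within the allowed input growth), and that the functions $f, h, c$ can be updated consistently at each stage so that the final machine still fits Definition \ref{wph_def}. These are routine but fiddly details; everything else is a direct quantifier-elimination argument.
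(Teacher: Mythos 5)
The paper itself gives no proof of Fact \ref{wph_collapse_fact}: it is imported verbatim from Corollary 17 of \cite{rcb17}, so there is nothing in this document to compare your argument against line by line. On its own merits, your proof is correct, and it is exactly the classical $\mathsf{PH}$-style quantifier-block argument transplanted to the machine model of Definition \ref{wph_def}: dualizing guess modes gives that $\mathsf{\Pi}^{[P]}_t$ is the class of complements of $\mathsf{\Sigma}^{[P]}_t$-problems; the auxiliary problem $R$ obtained by treating the first-block guesses $\vec y$ as part of the input lies in $\mathsf{\Pi}^{[P]}_{t+1}$ (the at most $h(k)$ guessed integers are bounded by the register/time bound $f(k)(|x|+k)^c$, so appending them keeps the input fpt-sized); and re-attaching an existential block of at most $h(k)$ guesses in front of a $\mathsf{\Pi}^{[P]}_{t}$-machine for $R$ yields a $\mathsf{\Sigma}^{[P]}_{t+1}$-machine, after which the hypothesis and induction finish the job.

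Two remarks. First, the step you treat as bookkeeping is actually the heart of the matter and deserves an explicit sentence: prepending the existential block to the $\mathsf{\Pi}^{[P]}_{t}$-machine $M'$ is legal only because condition 3' of Definition \ref{wph_def} bounds the \emph{number} of guesses by $h(k)$ and says nothing about \emph{where} they occur, so it does not matter that $M'$ may begin with a long deterministic phase (of length depending on $|x|$) that now comes after nondeterministic guesses. For the tail-nondeterministic $\mathsf{A}$-Hierarchy this is precisely where the same argument collapses, which is why downward separation there is open and why Theorem \ref{ah_collapse_thm} needs its extra hypothesis; stating the contrast makes clear that your proof is not secretly proving too much. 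Second, no ``reordering or pre-scheduling'' of deterministic instructions interleaved with the first guess block is needed: the machine for $R$ simply simulates $M$ and, as long as no universal guess has yet occurred (a condition it can track on-line), replaces each existential guess instruction by reading the next component of $\vec y$ from its input; after the first universal guess it performs all guesses natively. With that simplification, the remaining details (updating $f$, $h$, $c$, and padding $\vec y$ to length $h(k)$) are indeed routine.
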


\begin{defin}[\cite{rcb17}]\label{oracle_def2}
An oracle ARAM has \emph{parameter-bounded oracle access} to a parameterized oracle, if its access to the oracle is balanced and, furthermore, there is a computable function $g$ such that on every input $(x,k)$, the machine makes at most $g(k)$ oracle queries.

If $\mathcal{C}$ is a class that is defined in terms of ARAMs, we write $\mathcal{C}(O)_{para}$ to denote that $\mathcal{C}$ has parameter-bounded access to the oracle $O$. If $\mathcal{C}=\mathsf{\Sigma}_t^{[P]}$, for some $t\geq 1$, we may also write $\mathcal{C}^O$ to mean $\mathcal{C}(O)_{para}$ (so $\mathsf{W[P]}^O=\mathsf{W[P]}(O)_{para}$).
\end{defin}

\section{The difficulty of separating $\mathsf{W[P]}$ from $\mathsf{FPT}$}\label{wp_fpt_sec}
In this section we show that there is likely no shortcut to proving $\mathsf{FPT}\neq\mathsf{W[P]}$ via any finite number of separations of the form $\mathsf{P}\neq\mathsf{NP}[h(n)\log n]$. For the sake of readability, the proofs of the theorems in this section have been moved to the appendix.

To prove the theorem, we need to construct an oracle relative to which two conditions hold simultaneously: the collapse of one pair of complexity classes and the separation of another. One approach to achieving this is to construct the oracle in stages, and to work towards one goal in the odd-numbered stages and towards the other in the even-numbered ones, while ensuring that the two constructions do not interfere with each other (see \cite[Theorem 5.1]{bookwilxu} for one example of an application of this technique). However, this approach does not always work, and in this case it fails because one pair of classes is parameterized (specifically, it is not possible to computably list all $\mathsf{FPT}$- or $\mathsf{W[P]}$-machines, but this appears to be necessary in this type of staged construction). To overcome this obstacle, we use an idea of Allender \cite{allender91}, who constructs an oracle with two parts: the first part is designed so as to ensure that one pair of classes collapses \emph{regardless of what the second part of the oracle is}; the second part can then be freely used in a diagonalization argument to separate the remaining pair of classes.
\begin{theor}\label{wp_fpt_thm}
For every polynomial-time-computable non-decreasing unbounded function $h$, there exists a computable oracle $B$ such that
\begin{align*}
\mathsf{P}^B\neq \mathsf{NP}[h(n)\log n]^{B}\textrm{,\ \ but\ \ }\mathsf{W[P]}(B)=\mathsf{FPT}(B).
\end{align*}
\end{theor}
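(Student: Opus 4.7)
The plan is to use the two-part oracle technique attributed to Allender, as foreshadowed in the discussion preceding the theorem. Write $B = B_1 \oplus B_2$, split so that $B_1$ is constructed to force the collapse $\mathsf{W[P]}(B) = \mathsf{FPT}(B)$ regardless of $B_2$, and $B_2$ is then built in stages via a Baker--Gill--Solovay-style diagonalization to achieve $\mathsf{P}^B \neq \mathsf{NP}[h(n)\log n]^B$. A direct single-pass staged construction does not work (as the preamble to this section notes) because the parameterized machines of interest are not effectively enumerable in a diagonalization-friendly way, so the two design goals are segregated into the two halves of $B$.

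For $B_1$, I would fix an effective enumeration $M_1, M_2, \ldots$ of nondeterministic $\mathsf{W[P]}$-Turing machines (Definition \ref{wp_def}) and define, by recursion on a time tag $t$,
\[
(e, x, k, 1^t) \in B_1 \iff M_e^{B_1 \oplus B_2} \text{ accepts } (x, k) \text{ within } t \text{ steps.}
\]
A TM can write at most $t-1$ symbols on its query tape in $t$ steps (one step is consumed by the QUERY instruction itself), so every $B_1$-sub-query it can produce has strictly smaller time tag $t' < t$; the recursion is therefore well-founded once $B_2$ is fixed, and the bit at tag $t$ depends on $B_2$ only through $B_2 \cap \{0,1\}^{\leq t}$, a crucial ``locality'' property. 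The inclusion $\mathsf{FPT}(B) \subseteq \mathsf{W[P]}(B)$ is immediate; conversely, an $\mathsf{FPT}(B)$-simulator of $M_e$ on $(x, k)$ with time bound $T(x, k) = f_e(k)(|x|+k)^{c_e}$ simply writes out $(e, x, k, 1^{T(x,k)})$ and reads the answer from a single query to $B_1$, giving $\mathsf{W[P]}(B) = \mathsf{FPT}(B)$.

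For $B_2$, take the witnessing language $L = \{1^n : B_2 \cap \{0,1\}^{h(n)\lceil\log n\rceil} \neq \emptyset\}$, manifestly in $\mathsf{NP}[h(n)\log n]^B$, and enumerate the deterministic polynomial-time oracle TMs $N_1, N_2, \ldots$. At stage $s$, choose $n_s$ much larger than all lengths committed in earlier stages (and than a threshold determined by $N_s$'s resource bounds), simulate $N_s^B$ on $1^{n_s}$ under the partial $B_2$ built so far (uncommitted bits tentatively $0$), and then extend $B_2$ by including or omitting a suitable test string of length $h(n_s)\lceil\log n_s\rceil$ so as to flip $L(1^{n_s})$ relative to $N_s$'s answer.

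The hard part will be reconciling the diagonalization with the self-referential definition of $B_1$: flipping a $B_2$-bit can cascade through the recursion and alter many $B_1$-bits queried by $N_s$, potentially changing $N_s$'s answer on $1^{n_s}$. The locality of $B_1$ confines the affected $B_2$-positions to lengths $\leq n_s^c$, where $n_s^c$ is $N_s$'s time bound, but a finer argument is required to show that at the precise test length $h(n_s)\lceil\log n_s\rceil$, the number of $B_2$-positions whose value can influence some $B_1$-bit queried by $N_s$ is strictly less than $2^{h(n_s)\lceil\log n_s\rceil} = n_s^{h(n_s)}$, so that a freely-settable test string always exists. This counting step, which uses crucially that $h$ is unbounded (so $n_s^{h(n_s)}$ eventually dominates the per-machine bound on the cumulative recursive expansion triggered by $N_s$'s $B_1$-queries), is the technical heart of the argument and the reason the full proof is deferred to the appendix.
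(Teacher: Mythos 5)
Your high-level architecture is the same as the paper's (an Allender-style two-part oracle, with the first part forcing $\mathsf{W[P]}(B)=\mathsf{FPT}(B)$ for every choice of the second part, and a Baker--Gill--Solovay diagonalization on the second part), but the concrete definition of $B_1$ has a fatal flaw that makes the separation half provably impossible, not merely hard to verify. Because membership of $(e,x,k,1^t)$ in $B_1$ is controlled only by a clock of $t$ steps, written in unary, and places no cap on how much nondeterminism $M_e$ may use within those $t$ steps, a deterministic polynomial-time machine with oracle $B$ can decide any $\mathsf{NP}^B$ predicate with one short query: on input $1^n$ it writes down the (fixed) index of a machine that guesses a string of length $h(n)\lceil\log n\rceil$ and queries $B_2$ about it, attaches a polynomial time tag, and asks $B_1$. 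Hence $\mathsf{NP}^B\subseteq\mathsf{P}^B$ for your $B$, your witness language $L$ lies in $\mathsf{P}^B$, and the counting step you defer (``fewer than $2^{h(n_s)\lceil\log n_s\rceil}$ influenced positions'') is false: a single poly-length $B_1$-query depends on \emph{every} $B_2$-string of the test length. Restricting $e$ to range only over genuine $\mathsf{W[P]}$-machines does not repair this: that set is undecidable (so $B_1$ as defined would not be computable), and even granting it, the nondeterminism bound $h_e(k)\lceil\log(|x|+k)\rceil$ is machine-dependent, so a short query naming a machine with, say, $h_e(k)=2^k$ and parameter $k=\lceil\log h(n)\rceil$ again buys $h(n)\log n$ bits of nondeterminism.

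The paper's construction contains exactly the two mechanisms your $B_1$ is missing. First, the strings placed in the collapse part $Q(A)$ encode computations under a \emph{fixed standard} resource bound expressed in terms of the encoded input itself ($kn$ steps and $k\log n$ nondeterministic bits, with $n=|x|+k+1$); an arbitrary $\mathsf{W[P]}$-machine with bounds $f(k)p(|x|+k)$ and $f(k)\log|x|$ is brought into this standard form by padding its input (the $0^t10^u10^{k'}10^k1x$ translation), so the $\mathsf{FPT}$ simulation is still a single query and the non-enumerability of $\mathsf{W[P]}$-machines is sidestepped. Second, each string of $Q(A)$ carries a unary pad of length $g(2^k)n^4$, where $g$ is essentially the inverse of $h$, so any query affordable within $l$ steps has $k\le\log h(l)$ and thus refers to a computation with only about $\log h(l)\log l$ nondeterministic bits; iterating this over the nested query tree bounds the number of $A$-positions influencing $P_i^{B}(0^n)$ by $2^{(2+\log h(l))\log l}$, and the preliminary lemma ($h(n)^i\ge h(n^i)$ infinitely often) is what makes this smaller than $2^{h(n)\log n}$ at infinitely many diagonalization lengths. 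Without some mechanism of this kind---a per-query nondeterminism cap tied to the query length, enforced by making high-nondeterminism queries prohibitively long---the free test string you need at stage $s$ need not exist, and for your $B_1$ it never does.
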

For this result we have used unrestricted oracle access to relativize the parameterized complexity classes, rather than the parameter-bounded type that we argued is natural for $\mathsf{W[P]}$ \cite{rcb17}. This is because restricting the oracle access to being balanced (or more) makes it possible to collapse even $\mathsf{para\textrm{-}NP}$ to $\mathsf{FPT}$, with the classical separation unchanged. Thus we would get an oracle relative to which $\mathsf{NP}$ and $\mathsf{P}$ differ while $\mathsf{para\textrm{-}NP}$ and $\mathsf{FPT}$ coincide, which is clearly an artifact of the restrictions placed on the oracle access, since we know that, unrelativized, a collapse of $\mathsf{para\textrm{-}NP}$ to $\mathsf{FPT}$ is equivalent to a collapse of $\mathsf{NP}$ to $\mathsf{P}$ (see \cite{fg}).

It seems reasonable to expect that if $\mathsf{W[P]}$ collapses to $\mathsf{FPT}$ relative to some oracle, then so should any class $\mathsf{W}[t]$. But to show that this is indeed the case, we first need to define oracle $\mathsf{W}[t]$-machines. Recall that a $\mathsf{W}[t]$-machine is similar to an $\mathsf{A}[t]$-machine, but the numbers it guesses nondeterministically are placed in a special set of \emph{guess registers}, to which the machine has only limited access \cite{cfg3} (see also \cite{bussislam06}). Naturally, an oracle $\mathsf{W}[t]$-machine should then have three sets of registers: the standard registers, guess registers (which the machine can not read from directly), and oracle registers. For such machines, the usual way to read from or write to the oracle registers is very limiting, because the machines' nondeterminism would only weakly be able to influence the query instances. For example, nondeterministically guessed numbers could not be written to the oracle registers. The interaction between the nondeterminism of such machines and their ability to form query instances can be strengthened without allowing the $\mathsf{W}[t]$-specific restrictions to be circumvented. We achieve this by making the oracle registers \emph{write-only}, and adding instructions that allow the machine to copy values from the guess registers to the oracle registers and to use numbers from the guess registers to address oracle registers. In this way, the machine can still not read the guessed numbers directly or use them in arithmetic computations, but can nevertheless use them for oracle queries. In many cases, this allows oracle $\mathsf{W}[t]$-machines to match $\mathsf{A}[t]$-machines in the way the oracle is used. 

\begin{defin}\label{wt_defin}
An \emph{oracle $\mathsf{W}[t]$-machine} is a $\mathsf{W}[t]$-machine that, in addition to the standard registers $r_0,r_1,\ldots$, and guess registers $g_0,g_1,\ldots$ (to which the machine has only restricted access), also possesses a set of oracle registers $o_0,o_1,\ldots$. The contents of oracle registers are never read from and are only affected by the following new instructions:
\begin{itemize}
\item[] SO\_MOVE - copy the contents of standard register $r_0$ to oracle register $o_{r_1}$;
\item[] GO\_MOVE - copy the contents of guess register $g_{r_0}$ to oracle register $o_{r_1}$;
\item[] ADDR\_GO\_MOVE - copy the contents of register $r_0$ to $o_{g_{r_1}}$;
\item[] OO\_MOVE - copy the contents of $o_{r_0}$ to $o_{r_1}$.
\end{itemize}
Additionally, the machine has a QUERY instruction that places either the value $0$ or $1$ into $r_0$, depending on whether the contents of the oracle registers at the time when the instruction is executed represent a `no'- or a 'yes'-instance of the problem to which the machine has oracle access.
\end{defin}

Note that for such machines, it again makes sense to speak of unrestricted, balanced, parameter-bounded, or tail-restricted oracle access. With the above definition in mind, we can now prove Corollary \ref{wp_fpt_cor}, where oracle access is unrestricted.

\begin{corol}\label{wp_fpt_cor}
For any function $h$ as in Theorem \ref{wp_fpt_thm}, and the corresponding oracle $B$, we have that $\mathsf{FPT}(B)=\mathsf{W}[1](B)=\mathsf{A}[1](B)=\mathsf{W}[2](B)=\mathsf{A}[2](B)=\ldots=\mathsf{W[P]}(B)$.
\end{corol}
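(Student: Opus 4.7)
The plan is to sandwich each class in the stated chain between $\mathsf{FPT}(B)$ and $\mathsf{W[P]}(B)$, which already coincide by Theorem \ref{wp_fpt_thm}. The lower bounds $\mathsf{FPT}(B)\subseteq\mathsf{W}[t](B),\mathsf{A}[t](B)$ are immediate, since an $\mathsf{FPT}$-machine is a degenerate $\mathsf{W}[t]$- or $\mathsf{A}[t]$-machine with no nondeterministic guesses, and oracle access is unrestricted throughout. Moreover, $\mathsf{W}[t](B)\subseteq\mathsf{A}[t](B)$: an $\mathsf{A}[t]$-machine simulates a $\mathsf{W}[t]$-machine by making the same guesses directly into standard registers (which it can read freely), so that the oracle instructions SO\_MOVE, GO\_MOVE, ADDR\_GO\_MOVE, OO\_MOVE of Definition \ref{wt_defin} are trivially mimicked using ordinary register-to-oracle-register moves. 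Hence everything reduces to showing $\mathsf{A}[t](B)\subseteq\mathsf{FPT}(B)$ for all $t\geq 1$.

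To this end, first observe that $\mathsf{A}[t](B)\subseteq\mathsf{\Sigma}_t^{[P]}(B)$: the time bound combined with tail-nondeterminism (condition 3 of Definition \ref{At_def}) forces the total number of guesses to be at most $h(k)$, matching condition 3' of Definition \ref{wph_def}, and the inclusion carries over verbatim to the unrestricted-oracle setting. It then suffices to prove, by induction on $t$, a relativized version of Fact \ref{wph_collapse_fact}: that $\mathsf{\Sigma}_t^{[P]}(B)=\mathsf{FPT}(B)$ for every $t\geq 1$. The base case is Theorem \ref{wp_fpt_thm}. For the inductive step, assume $\mathsf{\Sigma}_t^{[P]}(B)=\mathsf{FPT}(B)$; closure of $\mathsf{FPT}$ under complementation yields $\mathsf{\Pi}_t^{[P]}(B)=\mathsf{FPT}(B)$ as well. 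Given a $\mathsf{\Sigma}_{t+1}^{[P]}$-machine $M$ with oracle $B$, I would decompose its run as an initial existential block $y$ of up to $h(k)$ integers followed by a $\mathsf{\Pi}_t^{[P]}$-style alternating subcomputation on input $(x,y,k)$ with parameter $k$; by the inductive hypothesis this inner predicate is decided by some $\mathsf{FPT}$-machine $M'$ with oracle $B$. Replacing the inner subcomputation by $M'$ turns $M$ into a $\mathsf{\Sigma}_1^{[P]}$-computation (i.e.\ a $\mathsf{W[P]}$-computation) with oracle $B$, which by Theorem \ref{wp_fpt_thm} lies in $\mathsf{FPT}(B)$. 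Thus $\mathsf{\Sigma}_{t+1}^{[P]}(B)=\mathsf{FPT}(B)$.

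Combining, $\mathsf{A}[t](B)\subseteq\mathsf{\Sigma}_t^{[P]}(B)=\mathsf{FPT}(B)$ and $\mathsf{W}[t](B)\subseteq\mathsf{A}[t](B)\subseteq\mathsf{FPT}(B)$ for every $t\geq 1$; together with the trivial reverse inclusions and Theorem \ref{wp_fpt_thm} this yields the entire chain of equalities. The main obstacle is the inductive step: one must verify carefully that the $\mathsf{\Pi}_t^{[P]}$-predicate obtained after conditioning on $y$ genuinely conforms to Definition \ref{wph_def} (with correct guess-count and register-value bounds once the existential block has been peeled off), and that splicing in the $\mathsf{FPT}$-simulation $M'$ together with the outer guess of $y$ produces a machine that truly fits the $\mathsf{\Sigma}_1^{[P]}$-model before Theorem \ref{wp_fpt_thm} can be invoked.
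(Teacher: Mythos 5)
Your proof is correct, but it is organized differently from the paper's. The paper splits the chain in two: for the $\mathsf{W}[t]$ levels it shows $\mathsf{W}[t](B)\subseteq\mathsf{W[P]}(B)$ directly, using the fact that every nondeterministic block of a $\mathsf{W}[t]$-machine after the first has constant size, so a $\mathsf{W[P]}(B)$-machine can guess the first block and deterministically brute-force the remaining $t-1$ blocks with fpt overhead (this is also where the paper notes that unrestricted oracle access is essential, since the number of queries then depends on $|x|$); for the $\mathsf{A}[t]$ levels it simply observes that the known proof of $\mathsf{FPT}=\mathsf{W[P]}\Rightarrow\mathsf{FPT}=\mathsf{A}[1]=\mathsf{A}[2]=\ldots$ relativizes under unrestricted access to $B$. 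You instead subsume the $\mathsf{W}[t]$ levels under the $\mathsf{A}[t]$ levels, embed $\mathsf{A}[t](B)$ into $\mathsf{\Sigma}_t^{[P]}(B)$ (legitimate: tail-nondeterminism yields at most $h(k)$ guesses, so condition 3' of Definition \ref{wph_def} holds, and this is the same machine, so it relativizes trivially), and then re-derive in relativized form the quantifier-peeling induction behind the cited implication, i.e.\ a relativized analogue of Fact \ref{wph_collapse_fact} collapsing every $\mathsf{\Sigma}_t^{[P]}(B)$ to $\mathsf{FPT}(B)$. The underlying induction is essentially the same argument the paper imports as a black box; what your route buys is self-containedness and a uniform treatment of both hierarchies at once, while the paper's route is shorter and makes explicit where unrestricted access is needed. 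Three small points to tighten, none affecting correctness: Theorem \ref{wp_fpt_thm} is proved in the TM model with a classical string oracle, whereas your outer machine is a $\mathsf{\Sigma}_1^{[P]}$ RAM querying with integer sequences, so a word on the standard model translation (harmless here, since query sizes are bounded by the fpt running time) is needed; in your $\mathsf{W}[t](B)\subseteq\mathsf{A}[t](B)$ simulation the oracle $\mathsf{A}[t]$-machine also cannot read its oracle registers, so instructions such as OO\_MOVE of Definition \ref{wt_defin} should be simulated by keeping a shadow copy of the oracle registers in standard memory; and the outer existential machine should pad its running time so that guessing integers as large as the simulated machine's register bound is allowed by the ARAM guessing convention.
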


\section{The structure of $\mathsf{W}[1]$ and its relation to $\mathsf{FPT}$}\label{fpt_w1_sec}
Under the assumption that $\mathsf{P}\neq\mathsf{NP}$, it is meaningful to ask whether the relation between the two classes is the same as the one between $\mathsf{FPT}$ and $\mathsf{W}[1]$. So far, we have seen evidence that the two parameterized classes are closer to each other in the sense that proving a separation between them is more difficult than proving $\mathsf{P}\neq\mathsf{NP}$. In this section we look at other ways in which $\mathsf{W}[1]$ is closer to $\mathsf{FPT}$ than $\mathsf{NP}$ is to $\mathsf{P}$.

In this section, the definitions of classes in terms of RAMs are used, instances of problems and oracles query instances are encoded as integer sequences, and oracles are parameterized.\\

\noindent\emph{Is $\mathsf{FPT}$ low for $\mathsf{W}[1]$?}

Given that $\mathsf{FPT}$ is the class of tractable problems in parameterized complexity, and that $\mathsf{P}$-oracles add no computational power to $\mathsf{NP}$ (or to any class $\mathsf{NP}[h(n)\log n]$), one might expect $\mathsf{FPT}$ to also be low for $\mathsf{W}[1]$. It turns out, however, that allowing tail-nondeterministic machines to make even tail-restricted queries to an $\mathsf{FPT}$-oracle can increase their computational strength to that of $\mathsf{W[P]}$. We prove this for $\mathsf{A}[1]$ first, since this machine model is more easily relativizable.

\begin{theor}\label{fpt_low_thm}
$\mathsf{A}[1]^{\mathsf{FPT}}=\mathsf{W[P]}$. Therefore, $\mathsf{FPT}$ is low for $\mathsf{A}[1]$ if and only if $\mathsf{W[P]}=\mathsf{A}[1]$ and the $\mathsf{W}$-Hierarchy collapses to its first level.
\end{theor}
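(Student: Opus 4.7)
The plan is to prove the equality $\mathsf{A}[1]^{\mathsf{FPT}}=\mathsf{W[P]}$ by establishing the two inclusions separately, and then to deduce the ``therefore'' clause from the standard sandwich $\mathsf{A}[1]=\mathsf{W}[1]\subseteq\mathsf{W}[t]\subseteq\mathsf{W[P]}$ for every $t\geq 1$. For the lower bound I would exploit the $\mathsf{W[P]}$-completeness of $p\textsc{-WSatCircuit}$; for the upper bound I would directly simulate a tail-restricted $\mathsf{A}[1]^O$-computation (with $O\in\mathsf{FPT}$) by a $\mathsf{W[P]}$-TM in the sense of Definition \ref{wp_def}.

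For $\mathsf{W[P]}\subseteq\mathsf{A}[1]^{\mathsf{FPT}}$ it is enough to place $p\textsc{-WSatCircuit}$ into $\mathsf{A}[1]^{\mathsf{FPT}}$, since $\mathsf{FPT}\subseteq\mathsf{A}[1]^{\mathsf{FPT}}$ already and $\mathsf{A}[1]^{\mathsf{FPT}}$ is closed under fpt-reductions. Let $O$ be the parameterized problem that on input $\langle C, i_1,\ldots,i_k, k\rangle$ decides whether setting $x_{i_1},\ldots,x_{i_k}$ to true satisfies $C$; this lies in $\mathsf{P}\subseteq\mathsf{FPT}$ and the query parameter is simply $k$. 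An $\mathsf{A}[1]^O$-algorithm for $p\textsc{-WSatCircuit}$ proceeds in two phases: during the (deterministic, parameter-bounded) prefix it copies an encoding of $C$ and $k$ into the oracle registers; then in a tail of length $h(k)=O(k)$ it existentially guesses $k$ integers $i_1,\ldots,i_k\in\{1,\ldots,n\}$, transfers them to the oracle registers, and fires a single query. Tail-restriction is satisfied because the only query is inside the tail, balanced access holds because the query-instance parameter is exactly $k$, and the writes to oracle registers that happen during the prefix are admissible (only \emph{queries}, not \emph{writes}, are restricted to the tail under Definition \ref{oracle_def1}).

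For $\mathsf{A}[1]^{\mathsf{FPT}}\subseteq\mathsf{W[P]}$ I would take an arbitrary $\mathsf{A}[1]$-machine $A$ with tail-restricted oracle access to some $O\in\mathsf{FPT}$, and simulate it by a $\mathsf{W[P]}$-TM $M$. Since $A$ is parameter-bounded and runs in time $f(k)(|x|+k)^c$, each of the $h(k)$ integers guessed in the tail fits in $\log f(k)+c\log(|x|+k)$ bits, so the total nondeterminism consumed is bounded by $h'(k)\log(|x|+k)$ for a suitable computable $h'$ — exactly the $\mathsf{W[P]}$ budget. The machine $M$ reproduces the deterministic prefix verbatim (with the standard polynomial RAM-to-TM overhead), and in the tail interleaves nondeterministic bit-guesses with deterministic steps; each oracle query is answered inline by running a fixed $\mathsf{FPT}$-algorithm for $O$, whose cost stays within the overall $f(k)(|x|+k)^c$-time envelope because balanced access bounds the query parameter by a function $g(k)$ of $k$ alone. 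Combining the two inclusions yields $\mathsf{A}[1]^{\mathsf{FPT}}=\mathsf{W[P]}$, whence $\mathsf{FPT}$ is low for $\mathsf{A}[1]$ iff $\mathsf{W[P]}=\mathsf{A}[1]=\mathsf{W}[1]$, which in turn forces $\mathsf{W}[t]=\mathsf{W}[1]$ for all $t\geq 1$ by the sandwich. The main obstacle I anticipate is not the bit-counting or the $\mathsf{FPT}$-simulation (both routine once definitions are unpacked), but the careful verification in the lower-bound direction that pre-loading the circuit into the oracle registers during the deterministic prefix is actually licensed by the tail-restricted access model, and that the parameter of the constructed query depends only on $k$ — without these two observations, the $\mathsf{A}[1]^{\mathsf{FPT}}$-algorithm for $p\textsc{-WSatCircuit}$ would not fit its resource bounds.
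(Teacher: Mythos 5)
Your proposal is correct in outline and follows essentially the same route as the paper: for the lower bound you pre-load the circuit into the oracle registers during the deterministic prefix and write only the $k$ guessed indices in the tail, querying a $\mathsf{P}\subseteq\mathsf{FPT}$ verification oracle (the paper's oracle $p\textsc{-WSatCircuit-with-assignment}$ differs only cosmetically in taking a weight-$k$ $0/1$-vector that the machine modifies in $k$ positions, rather than a list of indices), and for the upper bound you make the same observation the paper packages as $\mathsf{A}[1]^{\mathsf{FPT}}\subseteq\mathsf{A}[1](\mathsf{FPT})_{bal}\subseteq\mathsf{W[P]}(\mathsf{FPT})_{bal}=\mathsf{W[P]}$, namely that balanced $\mathsf{FPT}$-oracle calls of a parameter-bounded, tail-nondeterministic machine can be absorbed inline into an fpt-time computation whose nondeterminism fits the $h'(k)\log(|x|+k)$ budget. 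The two points you flag as delicate (writes to the oracle registers are permitted outside the tail, since Definition \ref{oracle_def1} restricts only queries; and the query parameter depends computably on $k$ alone) are exactly the load-bearing observations in the paper's proof, and invoking closure under fpt-reductions is fine, since the reduction can be computed in the deterministic prefix with the new parameter depending only on $k$ --- which is precisely how the paper's machine begins.

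One detail in your lower bound does need a patch: as you define the oracle, an instance $\langle C,i_1,\ldots,i_k,k\rangle$ with repeated indices corresponds to a satisfying assignment of weight strictly less than $k$, so your machine could wrongly accept a circuit that has a light satisfying assignment but none of weight exactly $k$, whereas $p\textsc{-WSatCircuit}$ demands weight exactly $k$. Either let the oracle additionally reject tuples whose indices are not pairwise distinct (still decidable in polynomial time), or have the $\mathsf{A}[1]$-machine verify distinctness of its $k$ guesses in $O(k^2)$ tail steps. The paper avoids the issue by guessing pairwise distinct indices and by requiring the assignment vector in the oracle instance to have weight exactly $k$. With that one-line fix your argument goes through.
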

\begin{proof}
We have that $\mathsf{A}[1]^{\mathsf{FPT}}\subseteq\mathsf{A}[1](\mathsf{FPT})_{bal}\subseteq\mathsf{W[P]}(\mathsf{FPT})_{bal}=\mathsf{W[P]}$, with the final equality holding because a $\mathsf{W[P]}$-machine can replace balanced oracle calls to $\mathsf{FPT}$-problems by fpt-length computations.

To show that $\mathsf{W[P]}\subseteq \mathsf{A}[1]^{\mathsf{FPT}}$, we define the following problem:
\begin{center}
\fbox{
\begin{minipage}{12cm}
$p\textsc{-WSatCircuit-with-assignment}$\\
\begin{tabular}{ r l }
Input: & \parbox[t]{9cm}{A circuit $C$ with $n$ inputs, $k\in\mathbb{N}$, and vector $v\in\{0,1\}^n$ of weight $k$.}\\
Parameter: & $k$.\\
Problem: & Decide whether $v$ is a satisfying assignment for $C$.
\end{tabular}
\end{minipage}}
\end{center}
Since the output of a circuit can be computed in time polynomial in its size, the above problem is obviously in $\mathsf{FPT}$\footnote{In fact, this problem is clearly in $\mathsf{P}$, meaning that we can actually prove the stronger statement $\mathsf{W[P]}\subseteq\mathsf{A}[1]^{\mathsf{P}}$. However, we choose $\mathsf{FPT}$ instead of $\mathsf{P}$ because the instance with which the oracle is queried will be fpt-sized, and because it is more natural to have $\mathsf{A}[1]$-machines query a parameterized oracle, rather than a classical one.}. Any problem $Q\in\mathsf{W[P]}$ can be solved by some $\mathsf{A}[1]$-machine $A$ with tail-restricted access to $p$\textsc{-WSatCircuit-with-assignment} as an oracle: First, $A$ reduces in fpt-time the input instance $(x,k)$ to an instance $(y,k')$ of $p$\textsc{-WSatCircuit}, where $k'$ depends computably only on $k$. Let $m$ be the number of input bits of the circuit encoded in $y$. If $m<k'$, $A$ rejects, otherwise it writes $y$, $0^m$, and $1^{k'}$ to its oracle registers, thus forming a valid instance of $p$\textsc{-WSatCircuit-with-assignment}, except that the assignment vector has weight $0$. Now $A$ enters the nondeterministic phase of its computation by guessing $k'$ pairwise distinct integers $i_1,\ldots,i_{k'}\in[m]$. It then modifies the assignment vector in the oracle registers by changing the zeroes at positions $i_1,\ldots,i_{k'}$ of the vector $0^m$ to $1$, queries the oracle, and accepts if the answer is `yes', otherwise it rejects.

It is easy to see that what this machine actually does is nondeterministically guess a satisfying assignment of the $p$\textsc{-WSatCircuit}-instance, if one exists, and delegate the verification to the oracle. The trick here is that the all-zero assignment vector must be written to the oracle registers deterministically, because during the nondeterministic phase at the end of the computation there may not be enough time to do so. Then the machine only needs to change the vector at $k'$ positions to obtain an assignment with the right weight, which takes only $O(k')$ steps with random access memory.
\end{proof}
Note that the proof that a $\mathsf{W[P]}$-machine can simulate $\mathsf{A}[1]$-machines with $\mathsf{FPT}$-oracles only works if the oracle access of the $\mathsf{A}[1]$-machines is tail-restricted or at least parameter-restricted. On the other hand, the proof that $\mathsf{A}[1]^{\mathsf{FPT}}\supset\mathsf{W[P]}$ only requires tail-restricted oracle access. We regard this as further evidence (in addition to the results from \cite{rcb17}) that tail-restricted oracle access is the natural type to consider for the class $\mathsf{A}[1]$.

Since $\mathsf{A}[1]\subseteq\mathsf{W}[t]\subseteq\mathsf{W[P]}$ holds for all $t\geq 1$, and by Theorem \ref{fpt_low_thm} we have that $\mathsf{A}[1]^{\mathsf{FPT}}=\mathsf{W[P]}=\mathsf{W[P]}(\mathsf{FPT})_{tail}$, it seems reasonable to expect that $\mathsf{W}[t]^{\mathsf{FPT}}=\mathsf{W[P]}$ holds for all $t$ as well. In order to prove that this is indeed the case, we need to use oracle $\mathsf{W}[t]$-machines (Definition \ref{wt_defin}), with tail-restricted access to the oracle. Then the proof is based on a combination of ideas from the proofs of Corollary \ref{wp_fpt_cor} and Theorem \ref{fpt_low_thm}.
\begin{corol}\label{fpt_low_cor}
For every $t\geq 1$ it holds that $\mathsf{W}[t]^{\mathsf{FPT}}=\mathsf{W[P]}$.
\end{corol}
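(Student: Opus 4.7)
The plan is to mirror the proof of Theorem \ref{fpt_low_thm}, using $p$\textsc{-WSatCircuit-with-assignment} as the $\mathsf{FPT}$-oracle, but carry out the simulation inside the more restrictive oracle $\mathsf{W}[t]$-machine model of Definition \ref{wt_defin}. I would prove the two inclusions separately and then note that the lower-bound construction is already a $\mathsf{W}[1]$-machine, so it lifts automatically to every $t\geq 1$.

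For the containment $\mathsf{W}[t]^{\mathsf{FPT}}\subseteq\mathsf{W[P]}$, I would write
\[
\mathsf{W}[t]^{\mathsf{FPT}}\subseteq\mathsf{W[P]}(\mathsf{FPT})_{bal}=\mathsf{W[P]},
\]
where the first inclusion uses that an oracle $\mathsf{W}[t]$-machine with tail-restricted access is a special case of a $\mathsf{W[P]}$-machine with balanced access (the QUERY instruction together with SO/GO/ADDR\_GO/OO\_MOVE is absorbed by the unrestricted oracle access of $\mathsf{W[P]}$), and the final equality is exactly the observation used at the start of the proof of Theorem \ref{fpt_low_thm}: a $\mathsf{W[P]}$-machine has enough deterministic time to replace each balanced query to an $\mathsf{FPT}$-problem with an in-line fpt-time computation of the query's answer.

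For the nontrivial direction $\mathsf{W[P]}\subseteq\mathsf{W}[t]^{\mathsf{FPT}}$, I would take an arbitrary $Q\in\mathsf{W[P]}$, fpt-reduce inputs $(x,k)$ of $Q$ to instances $(C,k')$ of $p$\textsc{-WSatCircuit} with $k'=f(k)$, and simulate this by an oracle $\mathsf{W}[1]$-machine $M$ using $p$\textsc{-WSatCircuit-with-assignment} as oracle. In its deterministic phase, $M$ builds the reduced instance and uses repeated SO\_MOVE instructions to copy the encoding of $C$, the all-zero vector $0^m$, and the weight pad $1^{k'}$ into its oracle registers, forming a valid query instance whose current candidate assignment has weight $0$. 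In the tail, $M$'s single existential guess block nondeterministically writes $k'$ pairwise distinct indices $i_1,\dots,i_{k'}\in[m]$ into guess registers $g_0,\dots,g_{k'-1}$. It then loads the constant $1$ into a standard register and, for $j=0,1,\dots,k'-1$, executes ADDR\_GO\_MOVE so that $1$ is written to oracle register $o_{g_j}$, flipping the $i_{j+1}$-th coordinate of the candidate assignment from $0$ to $1$; finally $M$ issues QUERY and accepts iff the answer is `yes'. The total tail work is $O(k')$ instructions, so tail-nondeterminism and tail-restricted oracle access are both respected; the guess registers are never read directly and are used only through the permitted ADDR\_GO\_MOVE indexing, so the $\mathsf{W}[1]$ access discipline is honored. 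Since any $\mathsf{W}[1]$-machine (with $0$ alternations) is also a $\mathsf{W}[t]$-machine for every $t\geq 1$, the same $M$ witnesses $Q\in\mathsf{W}[t]^{\mathsf{FPT}}$ for all $t$.

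The main obstacle is not the high-level simulation but making sure the guess-register discipline of oracle $\mathsf{W}[t]$-machines is never accidentally violated: the guessed indices must not flow into arithmetic, nor be used to address standard registers. The instructions of Definition \ref{wt_defin} are engineered exactly for this, and ADDR\_GO\_MOVE is the workhorse that replaces the direct use of guessed values available to the $\mathsf{A}[1]$-machine in the proof of Theorem \ref{fpt_low_thm}. A secondary point to verify is that the oracle-register layout produced in the deterministic phase (circuit, candidate-assignment block, weight pad) matches the integer-sequence encoding of $p$\textsc{-WSatCircuit-with-assignment} in such a way that a single-cell write at position $g_j$ corresponds to flipping exactly one coordinate of the candidate vector; this is a routine encoding choice once we place each bit of the candidate assignment in its own oracle register.
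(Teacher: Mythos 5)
Your second inclusion ($\mathsf{W[P]}\subseteq\mathsf{W}[t]^{\mathsf{FPT}}$) follows the paper's intent: the paper gives no written-out proof but states that the corollary combines the ideas of Theorem \ref{fpt_low_thm} and Corollary \ref{wp_fpt_cor}, and your tail simulation via SO\_MOVE/ADDR\_GO\_MOVE is precisely the Theorem-\ref{fpt_low_thm} half adapted to Definition \ref{wt_defin}. The genuine gap is in your first inclusion. An oracle $\mathsf{W}[t]$-machine with $t\geq 2$ is \emph{not} a special case of a $\mathsf{W[P]}$-machine with balanced access: its last $t-1$ guess blocks alternate between universal and existential modes, whereas a $\mathsf{W[P]}$-machine (a $\mathsf{\Sigma}^{[P]}_1$-machine, or the TM of Definition \ref{wp_def}) is purely existential, so it cannot ``absorb'' the universal guesses at all, oracle instructions aside. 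This is exactly the ingredient the paper imports from the proof of Corollary \ref{wp_fpt_cor}: the simulating $\mathsf{W[P]}$-machine uses its own existential guesses only for the first block of the $\mathsf{W}[t]$-machine and then \emph{deterministically enumerates} all outcomes of the remaining guess blocks, which is affordable because each such block contains at most a constant number $c'$ of guesses, giving an overhead factor of at most $(f(k)(|x|+k)^c)^{c'(t-1)}$, still fpt. Along each enumerated branch the balanced queries to the $\mathsf{FPT}$-oracle are answered by in-line fpt-time computations; note that the number of such evaluations may depend on $|x|$, which is harmless precisely because they are computations rather than parameter-bounded oracle calls. Without this enumeration step your chain $\mathsf{W}[t]^{\mathsf{FPT}}\subseteq\mathsf{W[P]}(\mathsf{FPT})_{bal}$ is unjustified for every $t\geq 2$.

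A secondary point on your lower-bound direction: unlike the $\mathsf{A}[1]$-machine in Theorem \ref{fpt_low_thm}, your oracle $\mathsf{W}[1]$-machine cannot read its guess registers, so it cannot \emph{verify} that the guessed indices are pairwise distinct and lie in $[m]$, and it also cannot add an offset to a guessed value when addressing oracle registers. Soundness therefore has to be enforced on the oracle side (e.g.\ the oracle answers `yes' only if the written vector has weight exactly $k'$, so repeated or out-of-range indices never cause wrongful acceptance) and by a register layout in which the raw guessed value addresses the intended cell; alternatively, use GO\_MOVE to copy the guessed indices themselves into fixed oracle positions and query an equivalent $\mathsf{FPT}$ problem that takes a circuit together with a list of $k'$ indices. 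These are fixable encoding details, but they are where the write-only guess-register discipline actually bites, and your ``routine encoding choice'' remark should be expanded to address them.
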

In \cite{rcb17} we showed that for every $t\geq 1$ the class $\mathsf{A}[t+1]$ can be obtained as $\mathsf{A}[1]^{O_t}$, where $O_t$ is a specific $\mathsf{A}[t]$-complete oracle, but we also observed that $\mathsf{A}[1]^{\mathsf{FPT}}$ does not appear to be a subset of $\mathsf{A}[t]$ for any $t$. Theorem \ref{fpt_low_thm} provides support for this intuition by identifying $\mathsf{A}[1]^{\mathsf{FPT}}$ as a class which is not known or believed to be a subset of any class $\mathsf{A}[t]$.
\begin{corol}\label{fpt_low_cor2}
For every $t\geq 1$ we have that if $\mathsf{A}[t+1]=\mathsf{A}[1]^{\mathsf{A}[t]}$, then $\mathsf{W[P]}\subset\mathsf{A}[t+1]$. In particular, if $\mathsf{W[P]}\not\subset\mathsf{A}[2]$, we have that $\mathsf{A}[1]^{\mathsf{A}[1]}\neq \mathsf{A}[2]$.
\end{corol}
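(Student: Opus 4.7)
The plan is to obtain this corollary as a short consequence of Theorem \ref{fpt_low_thm} combined with the trivial inclusion $\mathsf{FPT} \subseteq \mathsf{A}[t]$, which holds for every $t \geq 1$ (an $\mathsf{FPT}$-computation can be viewed as an $\mathsf{A}[t]$-computation that makes no nondeterministic guesses at all, vacuously satisfying the alternation condition in Definition \ref{At_def}).

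The key step is to lift this class-level inclusion to the oracle level. By the definition $\mathcal{C}_1^{\mathcal{C}_2} = \bigcup_{P \in \mathcal{C}_2} \mathcal{C}_1^P$, enlarging the pool of admissible oracles can only enlarge the resulting union, so
\[
\mathsf{A}[1]^{\mathsf{FPT}} \subseteq \mathsf{A}[1]^{\mathsf{A}[t]}.
\]
Theorem \ref{fpt_low_thm} identifies the left-hand side with $\mathsf{W[P]}$, and the hypothesis $\mathsf{A}[t+1] = \mathsf{A}[1]^{\mathsf{A}[t]}$ identifies the right-hand side with $\mathsf{A}[t+1]$; concatenating the two gives $\mathsf{W[P]} \subseteq \mathsf{A}[t+1]$, which is the first assertion of the corollary. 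The second assertion is then a direct contraposition at $t = 1$: if $\mathsf{W[P]} \not\subset \mathsf{A}[2]$, the hypothesis of the first part must fail for $t=1$, i.e., $\mathsf{A}[1]^{\mathsf{A}[1]} \neq \mathsf{A}[2]$.

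I do not anticipate a genuine obstacle, since no new ideas beyond Theorem \ref{fpt_low_thm} are required. The only minor point worth one line of justification is that both sides of the displayed inclusion use the same (tail-restricted) oracle-access convention fixed just before Theorem \ref{fpt_low_thm}, so that the identification $\mathsf{A}[1]^{\mathsf{FPT}} = \mathsf{W[P]}$ transfers verbatim into the chain above.
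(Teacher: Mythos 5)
Your proposal is correct and is essentially the argument the paper intends: the corollary follows immediately from Theorem \ref{fpt_low_thm} via the monotonicity $\mathsf{A}[1]^{\mathsf{FPT}}\subseteq\mathsf{A}[1]^{\mathsf{A}[t]}$ (both with tail-restricted access, as fixed by Definition \ref{oracle_def1}), giving $\mathsf{W[P]}=\mathsf{A}[1]^{\mathsf{FPT}}\subseteq\mathsf{A}[1]^{\mathsf{A}[t]}=\mathsf{A}[t+1]$ under the hypothesis, with the second assertion obtained by contraposition at $t=1$. The paper states the corollary without a written proof, and your justification fills in exactly the intended steps.
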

Corollary \ref{fpt_low_cor2} shows that the above-mentioned oracle characterization of the $\mathsf{A}$-Hierarchy from \cite{rcb17} can probably not be improved significantly: Although it may be possible to obtain $\mathsf{A}[t+1]$ by providing $\mathsf{A}[1]$ with other $\mathsf{A}[t]$-complete oracles, it is unlikely that $O_t$ can be replaced by the entire class $\mathsf{A}[t]$, for the somewhat counter-intuitive reason that $\mathsf{A}[t]$ contains all \emph{tractable} problems. More importantly, Corollary \ref{fpt_low_cor2} implies, assuming $\mathsf{W[P]}\not\subset\mathsf{A}[t+1]$ and that $\mathsf{PH}$ is proper, that each class $\mathsf{A}[t+1]$ is closer to the class $\mathsf{A}[t]$ than $\mathsf{\Sigma}^P_{t+1}$ is to $\mathsf{\Sigma}^P_{t}$, in the precise sense that $\mathsf{A}[t+1]\subsetneq \mathsf{A}[1]^{\mathsf{A}[t]}$, whereas $\mathsf{\Sigma}^P_{t+1}=\mathsf{NP}^{\mathsf{\Sigma}^P_{t}}$. The use of a highly restricted type of oracle access for the parameterized classes can only make this conclusion more legitimate.\\

We also mention one failed attempt to use Theorem \ref{fpt_low_thm}: It is natural to ask whether $\mathsf{W}[1]=\mathsf{FPT}$ would imply a collapse of larger parameterized $\mathsf{NP}$-analogues to $\mathsf{FPT}$. Currently, it is not even known if $\mathsf{W}[1]=\mathsf{FPT}\Rightarrow\mathsf{W}[2]=\mathsf{FPT}$. It would appear that the identity $\mathsf{A}[1]^{\mathsf{FPT}}=\mathsf{W[P]}$ offers a way to prove such statements via the following argument: If $\mathsf{A}[1]=\mathsf{FPT}$, then, by Theorem \ref{fpt_low_thm} and the fact that $\mathsf{FPT}$ is low for itself, we should be able to conclude that $\mathsf{W[P]}=\mathsf{A}[1]^{\mathsf{FPT}}=\mathsf{FPT}(\mathsf{FPT})_{tail}=\mathsf{FPT}$. Unfortunately, this argument fails because the property of being self-low is sensitive to the machine model used to define a class. Thus, we might have that $\mathsf{A}[1]=\mathsf{FPT}$, while $\mathsf{A}[1]^{\mathsf{FPT}}\neq\mathsf{A}[1]$. We give an example of this situation occurring in the classical setting:

\begin{prop}\label{counter_example_prop}
There exists a classical complexity class with two machine characterizations, denote them respectively by $\mathcal{C}_{1}$ and $\mathcal{C}_{2}$ (thus, $\mathcal{C}_{1}=\mathcal{C}_{2}$ as complexity classes), such that $\mathcal{C}_{1}^{\mathcal{C}_{1}}=\mathcal{C}_{1}$, but $\mathcal{C}_{2}^{\mathcal{C}_{2}}\neq \mathcal{C}_{2}$. In other words, the complexity class is self-low when defined in terms of one type of oracle machines, but not when defined in terms of the other type.
\end{prop}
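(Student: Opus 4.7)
The strategy will be to construct a classical class $\mathcal{C}$ together with two machine-based definitions $\mathcal{C}_1$ and $\mathcal{C}_2$ that characterize the same languages but relativize differently, mirroring exactly the phenomenon exposed by Theorem~\ref{fpt_low_thm}: a seemingly inert nondeterministic component that is useless without oracles, but becomes genuinely powerful once used to \emph{shape} oracle queries. In the parameterized setting the role of such a component is played by the tail-nondeterministic phase of an $\mathsf{A}[1]$-machine, which can nondeterministically modify a deterministically-prepared oracle query; the plan is to build the closest classical analogue.

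A natural target is $\mathcal{C}=\mathsf{P}$. I would take $\mathcal{C}_1$ to be the standard deterministic polynomial-time Turing machine model; the textbook identity $\mathcal{C}_1^{\mathcal{C}_1}=\mathsf{P}^{\mathsf{P}}=\mathsf{P}$ is the first step and gives self-lowness. For $\mathcal{C}_2$ I would define a variant polynomial-time machine whose base computation is deterministic but which has a dedicated ``query-formation'' nondeterministic primitive: immediately before each oracle query the machine may nondeterministically overwrite polynomially many bits on its oracle tape, and it accepts if and only if some sequence of such nondeterministic choices leads the underlying deterministic computation to accept. Because without an oracle this primitive is never invoked, $\mathcal{C}_2$-machines compute exactly $\mathsf{P}$; thus $\mathcal{C}_1=\mathcal{C}_2$ as classes, which is the second step and parallels the assumption $\mathsf{A}[1]=\mathsf{FPT}$ in the motivating discussion.

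The central step is then to analyze $\mathcal{C}_2^{\mathcal{C}_2}$. A $\mathcal{C}_2$-machine equipped with a $\mathsf{P}$-oracle can use the query-formation primitive to nondeterministically guess a polynomial-length NP-witness and have the $\mathsf{P}$-oracle verify it --- effectively turning the whole computation into an $\mathsf{NP}^{\mathsf{P}}=\mathsf{NP}$ one. Hence $\mathsf{NP}\subseteq\mathcal{C}_2^{\mathcal{C}_2}$, and under $\mathsf{P}\neq\mathsf{NP}$ we obtain $\mathcal{C}_2^{\mathcal{C}_2}\neq\mathcal{C}_2$, completing the desired asymmetry.

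The main obstacle I expect is removing the hypothesis $\mathsf{P}\neq\mathsf{NP}$: the clean form of the argument above only yields a conditional separation. To make the proposition unconditional, I would anchor the construction at a level of the deterministic time hierarchy where the analogous ``$D$ vs.\ $N$ with oracle-formation nondeterminism'' separation is provable by the time hierarchy theorem (scaling up both machine models together), or alternatively bake a tailored diagonalization into the definition of $\mathcal{C}_2$ that forces its oracle closure to contain a specific language outside $\mathcal{C}_1$ without disturbing the oracle-free equality $\mathcal{C}_1=\mathcal{C}_2$. Verifying that any such modification of $\mathcal{C}_2$ still captures precisely $\mathcal{C}_1$ unrelativized is the delicate point.
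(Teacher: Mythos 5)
Your construction establishes the proposition only under the hypothesis $\mathsf{P}\neq\mathsf{NP}$, but the statement is unconditional, and this is a genuine gap rather than a cosmetic one. With your query-formation primitive restricted to polynomially many nondeterministically written bits, the closure you obtain is exactly $\mathcal{C}_2^{\mathcal{C}_2}=\mathsf{NP}$ (guess all oracle-tape writes, then simulate the deterministic base machine and answer the $\mathsf{P}$-queries directly), so your $\mathcal{C}_2$ is self-low \emph{if and only if} $\mathsf{P}=\mathsf{NP}$; no unconditional conclusion can be extracted from this choice of parameters. The repairs you sketch do not close the gap as described: the time hierarchy theorem separates deterministic classes at \emph{different} time scales, not deterministic from nondeterministic time at the same scale, so ``scaling up both machine models together'' (e.g.\ to $\mathsf{EXP}$ versus $\mathsf{NEXP}$-style closures) leaves you with separations that are just as open as $\mathsf{P}$ versus $\mathsf{NP}$. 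What an unconditional version of your idea actually needs is a \emph{resource-scale} mismatch in the oracle mechanism --- for instance, letting the query be much longer than the base machine could write within its own resource bound --- rather than added nondeterminism; your text never isolates this, and your second fallback (baking a diagonalization into $\mathcal{C}_2$ while keeping $\mathcal{C}_1=\mathcal{C}_2$ unrelativized) is left entirely unspecified at exactly the point you yourself flag as delicate.

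For comparison, the paper sidesteps conditionality altogether: it takes $\mathcal{C}_1=\mathsf{IP}$ and $\mathcal{C}_2=\mathsf{PSPACE}$ with an oracle tape not charged against the space bound. Shamir's theorem gives $\mathcal{C}_1=\mathcal{C}_2$ as classes; the polynomial-time verifier forces $\mathsf{IP}$-queries to be polynomially short, so $\mathsf{IP}^{\mathsf{IP}}\subseteq\mathsf{PSPACE}^{\mathsf{PSPACE}}$ with short queries $=\mathsf{PSPACE}$, giving self-lowness of $\mathcal{C}_1$; whereas the free oracle tape lets a $\mathsf{PSPACE}$-machine pose exponentially long queries, yielding $\mathsf{EXPSPACE}$, which differs from $\mathsf{PSPACE}$ \emph{provably} by the space hierarchy theorem. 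So the known-theorem input there is an unconditional hierarchy separation plus $\mathsf{IP}=\mathsf{PSPACE}$, not an unproven lower bound. If you want to salvage your approach, the most direct route is to let the query-formation step write strings of length $2^{\mathrm{poly}(n)}$ (still leaving $\mathcal{C}_2=\mathsf{P}$ unrelativized, since the writes are invisible without an oracle); then a $\mathsf{P}$-oracle evaluating an exponentially long tableau-verification query yields $\mathsf{EXP}\subseteq\mathcal{C}_2^{\mathcal{C}_2}$, and $\mathsf{P}\neq\mathsf{EXP}$ is unconditional --- but note that the work is then being done by query length, exactly the paper's device, and not by the nondeterministic shaping of queries that your write-up presents as the engine of the argument.
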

\begin{proof}
Let $\mathcal{C}_1$ be the class $\mathsf{IP}$, and let $\mathcal{C}_2$ be $\mathsf{PSPACE}$, relativized so that the use of the oracle tape does not count towards the machine's space usage (this model has been studied exensively; see, for example, \cite{gavtoren}). Since the verifier in an interactive proof system is polynomial-time-bounded, any queries he makes to an oracle must also be polynomial-sized. Since $\mathsf{PSPACE}=\mathsf{IP}\subseteq\mathsf{IP}^\mathsf{IP}=\mathsf{IP}^\mathsf{PSPACE}\subseteq\mathsf{PSPACE}^{\mathsf{PSPACE}}$(with polynomial-sized queries) $=\mathsf{PSPACE}$, we have that $\mathcal{C}_1^{\mathcal{C}_1}=\mathcal{C}_1$. But a $\mathcal{C}_2$-machine with an oracle for $\mathsf{PSPACE}$ can make queries that are very large, and we get that $\mathcal{C}_2^{\mathcal{C}_1}=\mathsf{EXPSPACE}\neq\mathsf{PSPACE}$.
\end{proof}

\noindent\emph{A weak downward separation theorem for the $\mathsf{A}$-Hierarchy.}

It is a long-standing open problem whether the collapse of any class $\mathsf{A}[t+1]$ to $\mathsf{A}[t]$ would cause all higher levels of the $\mathsf{A}$-Hierarchy to coincide with $\mathsf{A}[t]$ (or, equivalently, whether a separation of two classes $\mathsf{A}[t+1]$ and $\mathsf{A}[t]$ would imply that all levels below $\mathsf{A}[t]$ are distinct, whence the name ``downward separation''). Given the similarities with $\mathsf{PH}$, one might expect such a theorem to hold for the $\mathsf{A}$-Hierarchy as well. Nevertheless, the proof of the downward separation theorem for the Polynomial Hierarchy does not appear to carry over directly to the parameterized setting. So far, the best result in this direction has been a theorem of Chen \emph{et al.}\ \cite{cfg3}, who showed that $\mathsf{W[P]}=\mathsf{FPT}$ implies $\mathsf{FPT}=\mathsf{A}[1]=\mathsf{A}[2]=\ldots$. This result is already non-trivial, since $\mathsf{A}[t]$ is not known to be a subset of $\mathsf{W[P]}$ for $t>1$, and can be viewed as a parameterized version of $\mathsf{P}=\mathsf{NP}\Rightarrow\mathsf{PH}=\mathsf{P}$, except that the stronger collapse $\mathsf{W[P]}=\mathsf{FPT}$ is required instead of $\mathsf{A}[1]=\mathsf{FPT}$ (in fact, the proof of the parameterized theorem in \cite{cfg3} is adapted from the proof of the corresponding classical theorem). Previously it was not known whether assuming a weaker collapse, for example $\mathsf{W[P]}=\mathsf{A}[1]$, might also suffice to prove that $\forall t\geq 1:\mathsf{A}[t]=\mathsf{A}[t+1]\Rightarrow \mathsf{A}\textrm{-Hierarchy}=\mathsf{A}[t]$. In what follows we prove such a theorem.

Let $\mathsf{A}[1]_c$ be the class of parameterized problems $Q$ such that there exists an $\mathsf{A}[1]$-machine that solves any instance $(x,k)$ of $Q$ in a number of steps depending only on $k$. This subclass of $\mathsf{A}[1]$ contains the problems that can be solved by $\mathsf{A}[1]$-machines \emph{without} the need for a precomputation that runs in fpt-time. It is provably not closed under fpt-reductions, but contains many important $\mathsf{W}[1]$-complete problems, provided that the input is given in an appropriate format. For example, $p\textsc{-IndependentSet}\in\mathsf{A}[1]_c$, if the input graph is given in the form of an adjacency matrix, because then an $\mathsf{A}[1]$-machine can first guess $k$ vertices (recall that a nondeterministic RAM can guess an integer between $1$ and $n$ in a single step; see Section \ref{prelim_sec1}) and use its random access memory to verify in $O(k^2)$ steps that none of the edges between two guessed vertices are in the graph. One can similarly show that $p\textsc{-ShortTMAcceptance}$ and other $\mathsf{W}[1]$-complete problems are in $\mathsf{A}[1]_c$.

If $\mathsf{W[P]}$ were to collapse to $\mathsf{A}[1]$, then the $\mathsf{W[P]}$-complete problem $p\textsc{-WSatCircuit}$ would also be $\mathsf{A}[1]$-complete, and therefore it would seem reasonable to expect that it is also in $\mathsf{A}[1]_c$, given an appropriate, efficiently computable encoding of the input. Thus, $p\textsc{-WSatCircuit}\in\mathsf{A}[1]_c$ seems only slightly less likely than $\mathsf{W[P]}=\mathsf{A}[1]$ (although, strictly speaking, both $p\textsc{-WSatCircuit}\in\mathsf{A}[1]_c$ and $p\textsc{-WSatCircuit}\in\mathsf{FPT}$ (used by Chen \emph{et al.}\ \cite{cfg3}) are strictly stronger assumptions than $p\textsc{-WSatCircuit}\in\mathsf{A}[1]$, and probably mutually incomparable). Under this assumption, we can prove the following:

\begin{theor}\label{ah_collapse_thm}
Assume that $p\textsc{-WSatCircuit}\in\mathsf{A}[1]_c$, meaning that there exists an $\mathsf{A}[1]$-machine that solves any instance $(x,k)$ of $p\textsc{-WSatCircuit}$ in a number of steps depending computably on $k$ alone. Then for all $t\geq 1$ we have that $\mathsf{A}[t]=\mathsf{A}[t+1]\Rightarrow(\forall u\geq 1:\mathsf{A}[t]=\mathsf{A}[t+u])$.
\end{theor}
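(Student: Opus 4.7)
The plan is to prove the claim by induction on $u$, adapting the classical downward-separation argument for the Polynomial Hierarchy while using the hypothesis $p\textsc{-WSatCircuit}\in\mathsf{A}[1]_c$ to overcome the step at which the classical proof fails in the parameterized setting. The base case $u=1$ is exactly the assumption $\mathsf{A}[t]=\mathsf{A}[t+1]$. For the inductive step, assuming $\mathsf{A}[t+u]=\mathsf{A}[t]$, take any $Q\in\mathsf{A}[t+u+1]$ and write its defining computation in alternating-quantifier form: $(x,k)\in Q \iff \exists y_1 \forall y_2 \cdots Q_{t+u+1} y_{t+u+1} : R((x,k),\bar y)$, where each $|y_i|=O(g(k)\log n)$ and $R$ is fpt-decidable. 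Peeling off the first existential gives the residual problem $P_Q := \{((x,k),y_1) : \forall y_2 \cdots Q_{t+u+1} y_{t+u+1} : R((x,k),y_1,\bar y)\}$ in $\mathsf{co\textrm{-}A}[t+u]$, and the inductive hypothesis, complemented, yields $P_Q \in \mathsf{co\textrm{-}A}[t]$, solved by some machine $M'$.

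To complete the inductive step it will suffice to combine the leading $\exists y_1$ with $M'$ into a tail-nondeterministic $\mathsf{A}[t+1]$-machine $M''$ for $Q$, after which the base case $\mathsf{A}[t+1]=\mathsf{A}[t]$ closes the induction. This is precisely where the classical argument fails when translated to the parameterized setting: $M'$ begins with an fpt-deterministic phase that depends on $y_1$, whereas $y_1$ can only be produced during the brief tail-nondeterministic phase of $M''$, leaving no time for $M'$'s fpt-preprocessing. The plan for bypassing this obstacle mirrors the proof of Theorem \ref{fpt_low_thm}. In its fpt-deterministic phase, $M''$ will construct a Boolean circuit $C_{x,k}$ of fpt-size whose inputs correspond to the bits of $y_1$ and whose output encodes the entire fpt-deterministic preprocessing of $M'$ on $((x,k),y_1)$, via a standard RAM-to-circuit translation; it will then set up in its registers a $p\textsc{-WSatCircuit}$-style instance whose weight-$k'$ satisfying assignments are in bijection with choices of $y_1$ together with the internal state induced by $C_{x,k}(y_1)$. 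By the hypothesis $p\textsc{-WSatCircuit}\in\mathsf{A}[1]_c$, the tail of $M''$ can then existentially verify the existence of such a satisfying assignment in a number of steps depending on $k$ only, and the remaining $t-1$ alternating blocks of $M'$ are performed directly by $M''$'s own alternating tail, giving a total of $1+t=t+1$ alternating blocks confined to the tail, i.e.\ $Q\in\mathsf{A}[t+1]$.

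The main technical obstacle I anticipate is designing the $p\textsc{-WSatCircuit}$-gadget so that the weighted-satisfaction constraint simultaneously pins down a unique $y_1$ and faithfully reflects the behaviour of $M'$'s fpt-preprocessing on $((x,k),y_1)$, while remaining solvable by the $\mathsf{A}[1]_c$-algorithm in a tail of length depending only on $k$; a related subtlety is that threading the first existential block of $M'$ through the $\mathsf{A}[1]_c$-call must not introduce a spurious extra alternation. Once the gadget is correctly constructed and the alternations are lined up, the base case $\mathsf{A}[t+1]=\mathsf{A}[t]$ gives $Q\in\mathsf{A}[t]$, completing the induction.
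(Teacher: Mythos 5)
Your induction scheme and the identification of where the classical argument breaks are fine, but the repair of the key step has a genuine gap. In your inductive step, after the tail of $M''$ has (in effect) committed to a choice of $y_1$, the remaining alternating blocks of $M'$ \emph{and} $M'$'s final in-tail verification still need random access to the full fpt-sized register state produced by $M'$'s fpt-deterministic preprocessing on $((x,k),y_1)$ --- a state that depends on $y_1$. Encoding that preprocessing as a circuit $C_{x,k}$ does not make this state available: $M''$ cannot evaluate the fpt-sized circuit in its tail of length depending on $k$ alone, and the hypothesized $\mathsf{A}[1]_c$-algorithm for $p\textsc{-WSatCircuit}$ is only a decision procedure, so it returns a single bit and not the ``internal state induced by $C_{x,k}(y_1)$'' that your claimed bijection is supposed to carry (a weight-$k'$ assignment can at best index that state, not place its contents into $M''$'s memory for the later $\forall/\exists$ blocks to read). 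Nor can you fold the whole residual computation into one $p\textsc{-WSatCircuit}$ call: after peeling off $\exists y_1$ the residual predicate is of $\mathsf{co\textrm{-}A}[t]$ type, i.e.\ alternating, whereas the hypothesis concerns only the purely existential weighted satisfiability problem; and when the innermost block is universal you face a complementation problem that an existential $\mathsf{A}[1]_c$-algorithm does not directly solve --- the ``spurious extra alternation'' you flag but do not resolve. To make your step go through one essentially has to encode the \emph{entire} computation of $M'$ (preprocessing plus its alternating tail) as a single weighted circuit, pre-write the circuit and a zero assignment region during $M''$'s fpt phase, have the tail guess the outer $t$ blocks and hardwire them by modifying $O(g(k))$ positions, and reserve the $\mathsf{A}[1]_c$-call for the innermost block only, with suitably quantified witnesses when that block is universal --- which is exactly the mechanism the paper uses.

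For comparison, the paper does not induct on levels at all: under the hypothesis it proves the structural identity $\mathsf{A}[t+1]=\mathsf{\Sigma}^{[P]}_{t+1}$ for every $t\geq 1$, via $\mathsf{\Sigma}^{[P]}_{t+1}\subseteq\mathsf{A}[t]^{p\textsc{-WSatCircuit}}$ (guess the outer $t$ blocks of a $p\textsc{-AWSatCircuit}_{t+1}$ instance in the tail, hardwire, and query) and $\mathsf{A}[t]^{p\textsc{-WSatCircuit}}\subseteq\mathsf{A}[t+1]$ (simulate tail-restricted queries using the $\mathsf{A}[1]_c$-algorithm with guessed, suitably quantified witnesses, as in Theorem \ref{fpt_low_thm} and \cite[Theorem 13]{rcb17}), and then imports the already known downward separation for the $\mathsf{W[P]}$-Hierarchy (Fact \ref{wph_collapse_fact}). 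If you want to keep your direct induction, you must first establish, by that same circuit-plus-hardwiring machinery, a lemma of the form ``$\exists\cdot\mathsf{co\textrm{-}A}[t]\subseteq\mathsf{A}[t+1]$ under the hypothesis''; as written, your proposal asserts rather than proves it.
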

\begin{proof}
We show that under the first assumption in the theorem statement, we have for every $t\geq 1$ that $\mathsf{A}[t+1]=\mathsf{\Sigma}^{[P]}_{t+1}$. Since we already have a downward separation theorem for $\mathsf{W[P]H}$ (Fact \ref{wph_collapse_fact}), it follows that the desired conclusion holds for the $\mathsf{A}$-Hierarchy.

First, we have for every $t\geq 1$ that $\mathsf{\Sigma}^{[P]}_{t+1}\subseteq \mathsf{A}[t]^{p\textsc{-WSatCircuit}}$, by a similar proof as that of Theorem \ref{fpt_low_thm}: To solve a problem $Q\in\mathsf{\Sigma}^{[P]}_{t+1}$, an $\mathsf{A}[t]^{p\textsc{-WSatCircuit}}$-machine will first compute a reduction to the canonical $\mathsf{\Sigma}^{[P]}_{t+1}$-complete problem $p\textsc{-AWSatCircuit}_{t+1}$, and, if $t$ is odd, modify the resulting circuit so that its output is flipped. The machine then uses its $t$-alternating nondeterminism to guess the variables to set to 1 in the first $t$ sets of the partition of the circuit's inputs, and hardwires this partial assignment into the circuit. The result is an instance of $p\textsc{-WSatCircuit}$, which can be solved with a single query to the oracle, and the oracle $\mathsf{A}[t]$-machine now outputs the oracle's answer if $t$ is odd, otherwise it outputs the opposite answer. It is easy to verify that this solves the problem $Q$.

Finally, we outline the proof that $\mathsf{A}[t]^{p\textsc{-WSatCircuit}}\subseteq\mathsf{A[t+1]}$, under the assumption that the algorithm for $p\textsc{-WSatCircuit}$ mentioned in the theorem statement exists. This inclusion is proved in the same manner as $\mathsf{A}[1]^{p\textsc{-MC}(\Sigma_t[3])}\subseteq\mathsf{A}[t+1]$ \cite[Theorem 13]{rcb17}, which is itself a parameterized version of the proof of the well-known fact that $\mathsf{NP}^{\Sigma_t\textsc{SAT}}\subseteq \mathsf{\Sigma}_{t+1}^{P}$ (see \cite[Section 5.5]{ab}). An $\mathsf{A[t+1]}$-machine can first perform the deterministic part of the oracle $\mathsf{A}[t]$-machine's computation, and then use its $(t+1)$-alternating nondeterminism to guess the answers to the subsequent oracle queries of the simulated machine (existentially), all of its $t$-alternating nondeterministic guesses, as well as (suitably quantified) witnesses for the query instances. Oracle queries are then replaced by computations in which the guessed witnesses are used instead of nondeterministic guesses. The fact that evaluations of $p\textsc{-WSatCircuit}$-queries can be performed in this manner, is due to the assumption that this problem has a nondeterministic algorithm running in time dependent on $k$ alone.

Since $\mathsf{A}[t+1]\subseteq \mathsf{\Sigma}^{[P]}_{t+1}$ holds unconditionally, we conclude that the two classes are equal, which completes the proof.
\end{proof}

\section{Level-by-level relativized separations of the $\mathsf{W}$- and the $\mathsf{A}$-Hierarchy}\label{wh_ah_sec}
In this section we give oracle-based evidence that the main parameterized hierarchies do not collapse in unforeseen ways. We start by constructing a single oracle relative to which the inclusion of every $\mathsf{W}[t]$ in $\mathsf{A}[t]$ is strict, except for the first level. In fact, we accomplish this by proving the strongest possible relativized separation between co-nondeterminism and (existential) nondeterminism in the parameterized setting: the weakest co-nondeterministic class with tail-restricted oracle access, against the strongest nondeterministic class with unrestricted oracle access.

The proofs of the theorems in this section are based on standard diagonalization arguments that have been adapted to the parameterized setting, and can be found in the appendix.

\begin{theor}\label{wh_ah_sep_thm}
There exists a computable oracle $O$ such that $\mathsf{co\textrm{-}A}[1]^O\not\subset \mathsf{para\textrm{-}NP}(O)$.
\end{theor}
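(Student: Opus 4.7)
The plan is to adapt a standard Baker--Gill--Solovay-style diagonalization to the parameterized setting. Define the witness language
\[
L_O=\{(1^n,k):\forall\,(y_1,\ldots,y_k)\in[n]^k,\ ((y_1,\ldots,y_k),k)\notin O\},
\]
where $O$ is a parameterized oracle whose query instances are sequences of non-negative integers. Membership in $\mathsf{co\textrm{-}A}[1]^O$ with tail-restricted access is straightforward: on input $(1^n,k)$ the machine uses the last $O(k)$ steps of its computation to universally guess $k$ integers $y_1,\ldots,y_k$, vacuously accepts any branch on which some $y_t>n$, and otherwise writes $(y_1,\ldots,y_k)$ together with the parameter $k$ to its oracle registers, submits a single query, and accepts iff the answer is ``no''. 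All universal guesses and the one query occur in the tail and every query parameter equals $k$, so the access is tail-restricted. The quantitative point to keep in mind is that the universal range $n^k$ eventually exceeds $f(k)n^c$ once $k>c$ and $n$ is large, and this is the slack the diagonalization will exploit.

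To force $L_O\notin\mathsf{para\textrm{-}NP}(O)$ I build $O$ in stages, one for each clocked nondeterministic oracle RAM $(M_i,f_i,c_i)$ in a standard effective enumeration. At stage $i$ I pick a fresh parameter value $k_i>c_i$ that is so large that no earlier machine $M_j$ ($j<i$) could, running on its own stage-$j$ input within the bound $f_j(k_j)(n_j+k_j)^{c_j}$, produce $k_i$ in any register; because each earlier machine is clocked, such a $k_i$ exists and is effectively computable. I then choose $n_i$ so that $n_i^{k_i}>f_i(k_i)(n_i+k_i)^{c_i}$ and simulate $M_i$ on $(1^{n_i},k_i)$ against the oracle fragment defined so far, with all still-undefined queries answered ``no''. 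If some computation path of $M_i$ accepts, its at most $f_i(k_i)(n_i+k_i)^{c_i}<n_i^{k_i}$ queries on that path cannot cover all of $[n_i]^{k_i}$, so there is a tuple $y^*\in[n_i]^{k_i}$ with $(y^*,k_i)$ unqueried, and I put $(y^*,k_i)$ into $O$; otherwise every path of $M_i$ rejects and I leave the parameter-$k_i$ slice of $O$ empty. In the first case $M_i$ still accepts while $(1^{n_i},k_i)\notin L_O$, and in the second $M_i$ rejects while $(1^{n_i},k_i)\in L_O$, so $M_i$ fails to decide $L_O$.

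The main obstacle is non-interference: later stages must not silently change the oracle on queries $M_i$ actually made during its stage-$i$ simulation, since such a change could overturn $M_i$'s verdict relative to the final $O$. Two design choices rule this out. First, the parameter values $k_1,k_2,\ldots$ are pairwise distinct, so once the parameter-$k_i$ slice of $O$ is fixed at stage $i$ no later stage modifies it, which handles $M_i$'s queries at parameter $k_i$. Second, for every $j>i$ the value $k_j$ lies beyond the range of register values $M_i$ can produce within its stage-$i$ time bound, so $M_i$ cannot form any query at parameter $k_j$ and is therefore oblivious to all later oracle writes. Together these imply that $M_i$'s behaviour relative to the final oracle coincides with its behaviour in the stage-$i$ simulation. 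Since each stage is a finite effective computation, the resulting $O$ is computable, and it separates $\mathsf{co\textrm{-}A}[1]^O$ from $\mathsf{para\textrm{-}NP}(O)$ as required.
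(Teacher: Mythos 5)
Your witness language, the tail-restricted $\mathsf{co\textrm{-}A}[1]$ upper bound, and the counting step ($n^k$ tuples versus fewer than $f(k)(n+k)^c$ queries on an accepting path) are exactly the right ingredients, and your non-interference device (reserving a fresh parameter slice $k_i$ chosen beyond any register value an earlier, time-bounded machine could have produced) is a legitimate alternative to the paper's device of choosing the input length $n$ larger than the length of every instance queried in earlier simulations. The genuine gap is the very first sentence of your diagonalization: there is no ``standard effective enumeration'' of clocked nondeterministic oracle RAMs $(M_i,f_i,c_i)$ covering all of $\mathsf{para\textrm{-}NP}$, because the clock $f$ ranges over \emph{all} computable functions, and no effective list of total computable functions is cofinal under domination (given such a list $g_1,g_2,\ldots$, the function $k\mapsto 1+\max_{j\leq k}g_j(k)$ is computable but majorized by no $g_j$). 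This is precisely the obstacle the paper itself points out in Section~\ref{wp_fpt_sec} for $\mathsf{FPT}$- and $\mathsf{W[P]}$-machines, and your construction leans on it twice: you need $f_i(k_i)$ to choose $n_i$ with $n_i^{k_i}>f_i(k_i)(n_i+k_i)^{c_i}$, and you need every stage to terminate so that $O$ is computable and later stages can pick their fresh parameters. If instead you enumerate all program/clock-program pairs, some clock computations diverge and the stage never completes, which breaks both computability of $O$ and the bookkeeping for freshness.

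The standard repair, which is what the paper does, keeps everything else in your argument intact: enumerate \emph{all} RAM programs (unclocked), run each program infinitely often on inputs $(0^n,k)$ with $k$ and $n$ growing along the stages, and simulate each run for the explicit budget $n^k$ steps, answering new queries with a fixed default and all others consistently with earlier stages. If a program really is a $\mathsf{para\textrm{-}NP}$-machine with some bound $f(k)(n+k)^c$, then at some stage $k>c$ and $n$ is large enough that $n^k>f(k)(n+k)^c$, the simulation sees the machine halt, and your accept/reject case analysis (add one unqueried tuple of the slice, or leave the slice untouched) goes through verbatim; no knowledge of $f$ or $c$ is ever needed. With that modification your proof is essentially the paper's, differing only in the complementation convention for $L(O)$ and in using fresh parameter slices rather than fresh input lengths to prevent interference between stages.
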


Since $\mathsf{co\textrm{-}A}[1]^O\subseteq\mathsf{A}[t]$ for all $t\geq 2$, and $\mathsf{W}[t]^{O}\subseteq\mathsf{para\textrm{-}NP}(O)$ for all $t\geq 1$, we immediately get the next corollary. Note, however, that the oracle constructed here does not appear to separate $\mathsf{A}[1]$ from $\mathsf{W}[1]$, since the separating problem in $\mathsf{co\textrm{-}A}[1]^O\setminus \mathsf{para\textrm{-}NP}(O)$ is not in $\mathsf{A}[1]^O$. Had a separation of two coinciding classes occured, this would have made the conclusion of Theorem \ref{wh_ah_sep_thm} much less convincing.
\begin{corol}\label{wh_ah_sep_cor}
There exists a computable oracle $O$ such that for every $t\geq 2$, $\mathsf{W}[t]^{O}\subsetneq \mathsf{A}[t]^{O}$.
\end{corol}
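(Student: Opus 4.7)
The plan is to deduce the corollary directly from Theorem \ref{wh_ah_sep_thm} by sandwiching the separation it provides between $\mathsf{W}[t]^O$ and $\mathsf{A}[t]^O$. Concretely, I want to verify two simple inclusions that hold relative to the same oracle $O$: that $\mathsf{co\textrm{-}A}[1]^O \subseteq \mathsf{A}[t]^O$ for every $t \geq 2$, and that $\mathsf{W}[t]^O \subseteq \mathsf{para\textrm{-}NP}(O)$ for every $t \geq 1$. Given these, the problem $L \in \mathsf{co\textrm{-}A}[1]^O \setminus \mathsf{para\textrm{-}NP}(O)$ produced by Theorem \ref{wh_ah_sep_thm} automatically lies in $\mathsf{A}[t]^O \setminus \mathsf{W}[t]^O$, yielding the strict inclusion for every $t \geq 2$.

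For the first inclusion, I would argue that a $\mathsf{co\textrm{-}A}[1]^O$-machine is a parameter-bounded tail-nondeterministic ARAM with tail-restricted access to $O$ which, during the last $h(k)$ steps, makes only universal nondeterministic guesses. An $\mathsf{A}[t]$-machine (with $t \geq 2$) must, by condition 4 of Definition \ref{At_def}, begin its nondeterministic phase with an existential guess, but it is free to follow this with a block of universal guesses (using only a single alternation, well within the $t-1$ budget). So the simulation simply prepends a dummy existential guess to the co-A[1] computation and then proceeds exactly as the original machine, keeping the FPT precomputation, the $h(k)$-step tail, and the tail-restricted oracle queries intact.

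For the second inclusion, note that a $\mathsf{W}[t]^O$-machine is, in particular, a nondeterministic ARAM whose running time and register contents are bounded by $f(k)(|x|+k)^c$, which is exactly the form of a $\mathsf{para\textrm{-}NP}$-machine; the additional $\mathsf{W}[t]$-restrictions (guess registers, tail-nondeterminism, bounded alternation, limited block sizes) only weaken the model. Since $\mathsf{para\textrm{-}NP}(O)$ allows unrestricted oracle access, it certainly contains any problem solvable with tail-restricted access under these more stringent computational constraints.

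Combining the three facts gives $L \in \mathsf{co\textrm{-}A}[1]^O \subseteq \mathsf{A}[t]^O$ and $L \notin \mathsf{para\textrm{-}NP}(O) \supseteq \mathsf{W}[t]^O$, which establishes $\mathsf{W}[t]^O \subsetneq \mathsf{A}[t]^O$ uniformly in $t \geq 2$. There is no real obstacle here once Theorem \ref{wh_ah_sep_thm} is in hand; the substantive work — the diagonalization balancing co-nondeterministic acceptance against an arbitrary para-NP adversary — is entirely contained in the theorem, and the corollary amounts to checking that the separating language survives the passage to the stronger class $\mathsf{A}[t]^O$ and the weaker class $\mathsf{W}[t]^O$.
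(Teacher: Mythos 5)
Your proposal is correct and is essentially the paper's own argument: the paper derives the corollary in exactly this way, citing the inclusions $\mathsf{co\textrm{-}A}[1]^O\subseteq\mathsf{A}[t]^O$ (for $t\geq 2$) and $\mathsf{W}[t]^O\subseteq\mathsf{para\textrm{-}NP}(O)$ and applying Theorem \ref{wh_ah_sep_thm}. Your added details (the dummy existential guess, and the observation that the $\mathsf{W}[t]$-specific restrictions and tail-restricted oracle access only weaken the machine relative to a $\mathsf{para\textrm{-}NP}$ machine with unrestricted access) are just the routine verifications the paper leaves implicit.
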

Finally, we show that the $\mathsf{W}$-Hierarchy is not likely to be contained in any finite level of the $\mathsf{A}$-Hierarchy.
\begin{theor}\label{wh_ah_sep_thm_2}
There exists for each $t\geq 1$ a computable oracle $O_{t}$ such that $\mathsf{W}[t+1]^{O_t}\not\subset \mathsf{A}[t]^{O_t}$, where both machines have tail-restricted access to $O_t$, but the $\mathsf{W}[t]$-machine has the stronger type of oracle access mentioned in Section \ref{wp_fpt_sec}.
\end{theor}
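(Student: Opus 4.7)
The plan is to adapt the standard quantifier-hierarchy oracle separation of Furst--Saxe--Sipser type to the parameterized, tail-restricted setting. Define, for any parameterized oracle $O$, the problem $L_t(O)$ consisting of those inputs $(1^n,k)$ for which there exists a tuple $\vec{a}\in[n]^k$ such that for all $b_2\in[n]$ there exists $b_3\in[n]$ such that $\cdots Q\, b_{t+1}\in[n]$ with $(\langle \vec{a}, b_2,\ldots, b_{t+1}\rangle, k)\in O$. This problem lies in $\mathsf{W}[t+1]^{O}$ for any $O$: an oracle $\mathsf{W}[t+1]$-machine uses its first (large) existential guess block, of length $k\leq h(k)$, to guess $\vec{a}$, then uses the following $t$ alternating constant-size blocks to guess the single integers $b_2,\ldots,b_{t+1}$, copies all guessed values into its oracle registers via GO\_MOVE and ADDR\_GO\_MOVE instructions, and finishes with a single tail-restricted query; this is precisely what the strengthened oracle-access model of Section \ref{wp_fpt_sec} was introduced to support.

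Next I would construct $O_t$ in stages, diagonalising against an enumeration $M_1,M_2,\ldots$ of oracle $\mathsf{A}[t]$-machines with tail-restricted access, each $M_i$ equipped with computable time bound $f_i(k)n^{c_i}$ and parameter-query bound $g_i(k)$. At stage $i$, I pick a parameter $k_i$ strictly larger than every $g_j(k_j)$ for $j<i$, so that freezing only entries of parameter exactly $k_i$ preserves every earlier commitment, and then pick $n_i$ so large that it dwarfs the total number of distinct oracle queries that $M_i$ could issue across all branches of its $t$-alternating computation tree on input $(1^{n_i},k_i)$. The aim is to extend $O_t$, adding only parameter-$k_i$ tuples, so that $M_i^{O_t}$ returns the wrong value on $(1^{n_i},k_i)$. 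This is done by a bottom-up adversary argument on the $t+1$ quantifier levels of $L_t$: at each level, because $n_i$ exceeds the number of queries, the adversary always has a fresh witness index at which to satisfy or block the machine's commitments, and the extra alternation that $L_t$ enjoys over $M_i$'s $t$ alternations guarantees that at least one of the two attack strategies (forcing a false reject or a false accept) succeeds.

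The main obstacle is reconciling this quantifier-adversary argument with the parameter structure. Classically, witness lengths can be arbitrary polynomials in $n$, whereas here the witnesses lie in $[n]$ but the associated oracle entries must carry parameter exactly $k_i$, which is forced to be small relative to every later stage. I would need to verify that the adversary's strategy, which fixes oracle membership at many fresh tuples, can be carried out using only parameter-$k_i$ entries without being disrupted by the machine's parameter-bounded queries or by the tail-restriction; both must be checked to interact with the nondeterministic branching in a way that still leaves the adversary enough freedom to win at all $t+1$ quantifier levels. Once this combinatorial strategy is in place, computability of $O_t$ follows because every stage is finite and all involved bounds are computable.
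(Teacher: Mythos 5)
Your separating problem, the $\mathsf{W}[t+1]^{O}$-membership argument via the strengthened oracle registers, and the staged enumeration of oracle $\mathsf{A}[t]$-machines all coincide with the paper's proof. The gap is in the one step that carries all the weight: actually fooling $M_i$. The quantitative lever you propose for it is unobtainable: you cannot ``pick $n_i$ so large that it dwarfs the total number of distinct oracle queries that $M_i$ could issue across all branches,'' because the tail-nondeterministic phase of $M_i$ has up to $h_i(k_i)$ guesses, each ranging over values up to the running time, so the computation tree has on the order of $\bigl(f_i(k_i)n_i^{c_i}\bigr)^{h_i(k_i)}$ branches and the total query count grows polynomially in $n_i$ with exponent roughly $c_i h_i(k_i)$ --- it outgrows any choice of $n_i$. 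What tail-restriction actually buys, and what the paper uses, is a \emph{per-path} bound: each individual computation path makes at most $h_i(k_i)$ queries, a quantity independent of $n_i$, so one only needs $n_i>h_i(k_i)$ together with $n_i^{k_i}$ exceeding the time bound.

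Second, the ``bottom-up adversary argument on the $t+1$ quantifier levels,'' which you yourself flag as the unresolved main obstacle, is precisely the content of the proof and is never supplied. In the case the paper treats in detail (a purely existential machine against the $\exists\forall$ language), the dichotomy is concrete: answer all fresh queries `yes'; if the machine rejects on every path, insert the entire block $\{(\vec a\,b,k): b\in[n]\}$ for a single prefix $\vec a$ --- the inserted tuples are either already answered `yes' or never queried, so no path changes and the machine still rejects while the input becomes a `yes'-instance; if the machine accepts, fix one accepting path, and since that path queries fewer than $n_i$ tuples, every prefix $\vec a$ has an unqueried extension that can be declared outside the oracle, turning the input into a `no'-instance without disturbing the accepting path. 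Your sketch never reaches this case analysis, and for machines that themselves alternate the assertion that ``one of the two attack strategies succeeds'' cannot be settled by the counting you invoke, since an accepting strategy subtree can contain $n_i^{\Theta(h_i(k_i))}$ queries; some genuinely level-by-level argument (which the paper also only gestures at for larger $t$, but whose base mechanism it does spell out) would have to be given. As it stands, the proposal sets up the right framework but leaves the diagonalization step, i.e.\ the actual proof, open, and the one quantitative idea offered in its place is false.
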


As mentioned in the introduction, each oracle $O_t$ also separates the classes $\mathsf{W}[t]$ and $\mathsf{W}[t+1]$ in the relativized setting, since $\mathsf{W}[t]^{O_t}\subseteq\mathsf{A}[t]^{O_t}$ but $\mathsf{W}[t+1]^{O_t}\not\subset \mathsf{A}[t]^{O_t}$.

Although the conclusion of Corollary \ref{wh_ah_sep_cor} is made more believable by the use of the oracle $\mathsf{W}[t]$-machines described in Section \ref{wp_fpt_sec} (since the separation is against a more powerful oracle machine), Theorem \ref{wh_ah_sep_thm_2} would have been more convincing if the oracle machines did not have an enhanced ability to combine nondeterminism with oracle queries. On the other hand, it seems that if one further weakens the oracle access, it may not be possible to prove that $\mathsf{W}[t]^{\mathsf{FPT}}=\mathsf{W[P]}$, which this is a very reasonable identity in light of Theorem \ref{fpt_low_thm}.

\section{Conclusion and open problems}
Our results, together with the previously known theorems mentioned in the introduction, strongly indicate that if the central conjecture of parameterized complexity theory holds at all, proving it may be hard \emph{even under the additional assumption of a separation between arbitrarily-weakly-nondeterministic polynomial-time and $\mathsf{P}$} (and, in particular, that $\mathsf{P}\neq\mathsf{NP}$). Of course, the same also applies to the nowadays ``standard'' conjecture ETH. Additionally, we have seen that $\mathsf{W}[1]$ and $\mathsf{FPT}$ are in some ways unexpectedly close, unless much of what is generally assumed in parameterized complexity theory (such as the $\mathsf{W}$-Hierarchy not collapsing) is false. All of this suggests that the hardness of a problem for up to $\mathsf{W[P]}$ should not be treated as strong evidence that the problem is computationally intractabile, at least not with a similar level of confidence as when $\mathsf{NP}$-hardness is considered evidence of intractability.\\
\\
We mention some open problems:\\

\noindent\textbf{Is $\mathsf{W[P]}=\mathsf{W}[1]$?} A recurring issue in Section \ref{fpt_w1_sec} has been whether $\mathsf{W[P]}$ and $\mathsf{W}[1]$ might be equal. This is an interesting possibility for at least two reasons. First, unlike the case of $\mathsf{W[P]}=\mathsf{FPT}$, a collapse of $\mathsf{W[P]}$ to $\mathsf{W}[1]$ could, as far as we know, be proved \emph{without} any new insight into the extent to which $\mathsf{P}$ can simulate limited nondeterminism. Second, a collapse of $\mathsf{W[P]}$ to $\mathsf{W}[1]$ would have the effect of greatly simplifying the landscape of parameterized complexity: the entire $\mathsf{W}$-Hierarchy would collapse to $\mathsf{W}[1]$, and the $\mathsf{A}$-Hierarchy would coincide, level-by-level, with $\mathsf{W[P]H}$, and therefore exhibit many of the properties we know to hold for the Polynomial Hierarchy (including downward separation).

At the very least, it would be interesting to know if such a collapse would have any effect on the classical complexity world.\\

\noindent\textbf{Downward separation for the parameterized hierarchies.} We have seen evidence that the levels of the $\mathsf{W}$- and the $\mathsf{A}$-Hierarchy are in various ways not far apart. In particular, the entire $\mathsf{W}$-Hierarchy appears to be closer to $\mathsf{FPT}$ than any class defined in terms of limited nondeterminism is to $\mathsf{P}$. Beigel and Goldsmith \cite{beigol} have shown that for the $\mathsf{\beta}$-Hierarchy (whose levels are the classes $\beta_i\mathsf{P}:=\mathsf{NP}[(\log n)^i]$), downward separation fails in a relativized setting, in the sense that they can construct oracles relative to which any finite set of collapses occurs without entailing further collapses.

Given the apparent closeness of the levels of the parameterized hierarchies to each other and to $\mathsf{FPT}$, and that for the classes $\beta_i\mathsf{P}$, downward separation can be made to fail relative to some oracles, we conjecture that it is possible to construct oracles relative to which downward separation fails for the $\mathsf{W}$- and the $\mathsf{A}$-Hierarchy. More precisely, we conjecture that there exist computable parameterized oracles $O_1$ and $O_2$ such that:
\begin{align*}
&\mathsf{W}[1]^{O_1}=\mathsf{W}[2]^{O_1}\textrm{ but }\mathsf{W}[2]^{O_1}\neq\mathsf{W}[3]^{O_1}\textrm{, and}\\
&\mathsf{A}[1]^{O_2}=\mathsf{A}[2]^{O_2}\textrm{ but }\mathsf{A}[2]^{O_2}\neq\mathsf{A}[3]^{O_2}.
\end{align*}

\subsection*{Acknowledgments}
I thank Harry Buhrman, S\'andor Kisfaludi-Bak, and Ronald de Wolf for helpful discussions. The counter-example in the proof of Proposition \ref{counter_example_prop} is due to Harry Buhrman. I am especially grateful to Ronald de Wolf and Leen Torenvliet for helpful comments on drafts of the paper.

\bibliography{rcbott_201712.bib}

\section*{Appendix}
\subsection*{Proofs of theorems in Section \ref{wp_fpt_sec}}
For the theorems in this section, complexity classes are defined in terms of TMs whenever possible (this applies to $\mathsf{P}$, $\mathsf{FPT}$, $\mathsf{NP}$, and $\mathsf{W[P]}$ in particular), instances of problems and oracles query instances are encoded as finite sequences of symbols from the set $\{0,1,\#\}$, and oracles are classical (not parameterized).
\begin{lemma}
Let $h:\mathbb{N}\rightarrow\mathbb{N}\cup\{0\}$ be an unbounded non-decreasing function such that $h(n)\leq\log(n)$ for all $n\in\mathbb{N}$. Then for any $i\in\mathbb{N}$ there exist infinitely many values $n$ such that $h(n)^i\geq h(n^i)$.
\end{lemma}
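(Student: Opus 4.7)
I would argue by contradiction, using an iteration of the inequality $h(n)^i < h(n^i)$ combined with the upper bound $h(n) \le \log n$. The case $i=1$ is trivial (we get $h(n) \ge h(n)$ for every $n$), so assume $i \ge 2$. Suppose, for contradiction, that only finitely many $n$ satisfy $h(n)^i \ge h(n^i)$, so there exists some threshold $N$ such that for every $n \ge N$,
\[
h(n)^i \;<\; h(n^i).
\]

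The key step is to iterate this inequality. Fix any $n_0 \ge N$; since $n_0^{i^j} \ge N$ for all $j \ge 0$, the hypothesis applies at every stage, giving
\[
h(n_0)^{i^k} \;<\; h(n_0^i)^{i^{k-1}} \;<\; \cdots \;<\; h\!\left(n_0^{i^k}\right)
\]
for every $k \ge 1$. Combining with $h\!\left(n_0^{i^k}\right) \le \log\!\left(n_0^{i^k}\right) = i^k \log n_0$ and taking $i^k$-th roots yields
\[
h(n_0) \;<\; \bigl(i^k \log n_0\bigr)^{1/i^k} \;=\; i^{\,k/i^k} \,(\log n_0)^{1/i^k}.
\]

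Now I would take $k \to \infty$ with $n_0$ held fixed. Since $i \ge 2$, both $k/i^k \to 0$ and $1/i^k \to 0$, so the right-hand side tends to $1$. Hence $h(n_0) \le 1$, and since $n_0 \ge N$ was arbitrary and $h$ takes values in $\mathbb{N}\cup\{0\}$, this forces $h(n) \in \{0,1\}$ for all $n \ge N$, contradicting the assumption that $h$ is unbounded.

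The argument is short and the only subtlety is making sure the iteration is legitimate (each intermediate value $n_0^{i^j}$ lies above the threshold $N$, which is immediate since $n_0 \ge N \ge 1$ and $i \ge 2$) and that the limit of $(i^k \log n_0)^{1/i^k}$ really is $1$; there is no genuine obstacle here, just bookkeeping. I do not expect to need any property of $h$ beyond monotonicity, unboundedness, integrality, and the bound $h(n) \le \log n$; in particular polynomial-time computability plays no role in this purely analytic lemma.
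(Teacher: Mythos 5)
Your proof is correct and follows essentially the same route as the paper's: iterate the assumed inequality to get $h(n_0)^{i^k} < h(n_0^{i^k}) \le i^k\log n_0$ and derive a contradiction. The only cosmetic difference is that the paper applies unboundedness up front (choosing $n_1$ with $h(n_1)\ge 2$ and contrasting $2^{i^k}$ with $i^k\log n_1$), whereas you take $i^k$-th roots, let $k\to\infty$ to force $h\le 1$ beyond the threshold, and invoke unboundedness at the end.
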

\begin{proof}
For $i=1$ the desired conclusion holds trivially. Let $h$ be as above and assume that for some $i\geq 2$ the conclusion does not hold, meaning that there exists an $n_0\in\mathbb{N}$ such that 
\begin{equation}\label{fct_lemma_eq}
\forall n\geq n_0:h(n)^i<h(n^i).
\end{equation}
Let $n_1\geq n_0$ be such that $h(n_1)\geq 2$ (such a value exists because $h$ is unbounded). By repeated applications of (\ref{fct_lemma_eq}) we get that for all $k\geq 1$, $h(n_1^{i^k})=h((n_1^{i^{k-1}})^i)>h(n_1^{i^{k-1}})^i>h(n_1^{i^{k-2}})^{i^2}>\ldots>h(n_1)^{i^k}\geq 2^{i^k}$. On the other hand, we have by the assumed bound on $h$ that $h(n_1^{i^k})\leq \log(n_1^{i^k})=i^k\log(n_1)$. Combining the two inequalities, we get that $i^k\log(n_1)\geq 2^{i^k}$, which leads to a contradiction for $k$ sufficiently large.
\end{proof}

\begin{proof}[Proof of Theorem \ref{wp_fpt_thm}]
Without loss of generality we may assume that $h$ is the quasi-inverse of a time-constructible function $g$ such that $g(n)\geq 2^n$ for all $n\in\mathbb{N}$ (see \cite[Lemma 1.35 and Lemma 3.24]{fg}).
Let $N_1,N_2\ldots$ be a computable list of all nondeterministic TMs. For a given set $A\subseteq\{0,1\}^\ast$, we define:
\begin{align*}
Q(A):=\{&y\mid y=x\#1^k\#0^{g(2^k) n^4}\#i,\textrm{ where }n=|x|+k+1\textrm{ and }N^{A\oplus Q(A)}_i\textrm{ accepts $x\#1^k$}\\
&\textrm{within $kn$ steps, using at most $k\log n$ nondeterministic bits.}\}
\end{align*}

The reason why we add $g(2^k)(|x|+k+1)^4$ zeroes to the strings in $Q(A)$ is to make querying this part of the oracle expensive for polynomial-time deterministic TMs, as well as for $\mathsf{W[P]}$-machines running in time $k(|x|+k+1)$, the latter being necessary in order for the recursive definition of $Q(A)$ to not be circular. The proof now proceeds as follows: First, we show that $Q(A)$ is well-defined. Next, we prove that, regardless of the set $A$, $\mathsf{FPT}$ and $\mathsf{W[P]}$ coincide relative to $A\oplus Q(A)$. Finally, we construct $A$ so as to separate $\mathsf{NP}[h(n)\log n]$ from $\mathsf{P}$ relative to $A\oplus Q(A)=:B$.\\

\emph{1) $Q(A)$ is well-defined, and computable if $A$ is computable.}

We show that $Q(A)$ can be constructed in stages in such a way that, whenever we decide whether to place a particular $y$ into $Q(A)$, the computation whose outcome is encoded by $y\in Q(A)$ can only query instances that have already been decided at an earlier stage. By induction on $m=|x|+k$, we may assume that $y\in Q(A)$ has been decided for all strings of the form $x\#1^k\#0^{g(2^k)(|x|+k+1)^4}\#i$, with $|x|+k<m$ (this is trivially true when $m=1$). For every $y$ with $|x|+k=m$, we have $|y|>(|x|+k+1)^4>k(|x|+k+1)$, whereas the corresponding computation can only query strings of length $<k(|x|+k+1)$ (due to the bound on its running time). Thus we have that a computation encoded by a string $y$ whose membership in $Q(A)$ is decided at stage $m$, can not query whether $y'\in Q(A)$ for any $y'$ for which $|x'|+k'\geq m$. We may conclude that $Q(A)$ is well-defined for any set $A$, and computable if $A$ is.\\

\emph{2) For any set $A\subseteq\{0,1\}^\ast$ we have $\mathsf{FPT}(A\oplus Q(A))=\mathsf{W[P]}(A\oplus Q(A))$.}

Let $R\subseteq\{x\#1^k\mid x\in\{0,1\}^\ast,k\in\mathbb{N}\}$ be a problem decided by some $\mathsf{W[P]}$-machine with unrestricted access to the $A\oplus Q(A)$ oracle. Then there exist $i\in\mathbb{N}$, a computable function $f$, and a polynomial function $p$, such that on input $x\#1^k$, $N^{A\oplus Q(A)}_i$ decides in time $f(k)p(|x|+k)$ whether $x\#1^k\in R$, using at most $f(k)\log(|x|)$ nondeterministic bits. Then there is a $j\in\mathbb{N}$ such that on input $x'\#1^{k''}$, $N^{A\oplus Q(A)}_j$ verifies that $x'$ is of the form $0^t10^u10^{k'}10^k1x$, with $x\in\{0,1\}^\ast$ (by counting the number of zeroes in each of the four sequences) and then performs the computation of $N^{A\oplus Q(A)}_i$ on input $x\#1^k$ for at most $k'u$ steps and using at most $k'\log u$ nondeterministic bits, and accepts if and only if the simulated computation does.

The first phase takes time $O(t+u+k'+k+|x|)$ if efficient counters are used. Simulating the computation of $N^{A\oplus Q(A)}_i$ with counters for the number of steps and the nondeterministic guesses takes time at most some polynomial in $k'u$. Thus, if $t=u^c$ and $k''=(k')^c$, for a suitable constant $c>1$, then $N^{A\oplus Q(A)}_j$ will run in time $k''t$, use fewer than $k''\log|x'|$ nondeterministic guesses, and accept if and only if $N^{A\oplus Q(A)}_i$ accepts in the right amount of time, with the right number of nondeterministic steps.

Finally, a deterministic machine can query $Q(A)$ to check whether $N^{A\oplus Q(A)}_j$ accepts on input $x'\#1^{f(k)^c}$, where $x'=0^{p(|x|+k)^c}10^{p(|x|+k)}10^{f(k)}10^k1$. Clearly, such a query can be constructed in fpt-time.\\

\emph{3) Diagonalization against $\mathsf{P}$-machines.}

Let $L$ be the language $\{0^n \mid n\in\mathbb{N}\textrm{ and }\exists y\{0,1\}^{\lceil h(n)\log(n)\rceil}\textrm{: } y\in A\}$. Clearly we have that $L\in\mathsf{NP}[h(n)\log n]^{A}\subseteq\mathsf{NP}[h(n)\log n]^{A\oplus Q(A)}$.

Let $P_1,P_2,\ldots$ be a computable enumeration of polynomial-time oracle TMs, in which every such machine appears infinitely many times. As is usual in such arguments, we consider a process in which each $P_i^{A\oplus Q(A)}$ is run on input $0^n$ for $n^i$ steps, and the set $A$ is defined on-the-fly so that every machine gives the wrong answer on at least one input (with respect to the question whether that input is in $L$).

Although we only decide which strings to place into $A$, this implicitly determines the strings in $Q(A)$, in a way that we do not control. Since every $P_i$ has access to both $A$ and $Q(A)$, the latter part of the oracle can also influence the outcome of computations. Furthermore, the strings in $Q(A)$ encode the outcomes of computations performed by nondeterministic machines with access to the oracle $A\oplus Q(A)$, and these computations also depend on $A$, as well as on the outcomes of other nondeterministic oracle computations, and so on. In order to ensure that we can fool each polynomial-time TM on some input, we need to upper-bound the number of strings whose membership or non-membership in $A$ can directly or indirectly influence the deterministic oracle computation.

At the level of the deterministic computation of a machine $P_i^{A\oplus Q(A)}$ on input $0^n$, the machine can make up to $n^i$ queries to $A$ or to $Q(A)$. Let $l:=n^i$. When $P_i$ queries $Q(A)$ for the outcome of a nondeterministic computation on an input $(x,k)$, the format of the corresponding query string imposes the restrictions $|x|\leq \sqrt[4]{l}$ and $k\leq \log h(l)$. These computations can then run for $k(|x|+k+1)\leq (\sqrt[4]{l})^2=\sqrt{l}$ steps and use $k\log(|x|+k+1)\leq 2\log h(l)\log \sqrt[4]{l}=(1/2)\log h(l)\log l$ nondeterministic bits (both for $l$ sufficiently large). Therefore, such a computation can make at most $\sqrt{l}$ queries to either $A$ or $Q(A)$ on each of at most $2^{(1/2)\log h(l)\log l}$ computation paths. If on one of these computation paths a query to $Q(A)$ is made for the outcome of a computation on an input $(x',k')$, then we have again that $|x'|\leq \sqrt[4]{\sqrt{l}}=\sqrt[8]{l}$, and we upper-bound $k'$ generously by $\log h(l)$ again. By the same reasoning we have that the computations on this second nondeterministic level can make at most $\sqrt[4]{l}$ queries on each of $2^{(1/4)\log h(l)\log l}$ computation paths. In this way we can upper-bound the number of $A$-queries that can be made by any non-deterministic computation whose outcome can be queried as a result of $Q(A)$-queries made by $P_i$. Note that with the diminishing limits on the running times, a query of ``order'' $\log l$ will refer to the outcome of a computation with a running time upper-bounded by $l^{1/l}< 2$, meaning that no non-trivial oracle queries can be made. An upper bound on the total number of queries (and hence, the number of queries to $A$) that influence the original deterministic computation, either directly or indirectly, can now be obtained by multiplying the upper-bounds for all the different levels of oracle queries. We get a bound of:
\begin{align*}
&l\cdot 2^{(1/2)\log h(l)\log l}\sqrt{l}\cdot 2^{(1/4)\log h(l)\log l}\sqrt[4]{l}\cdot\ldots\\
\leq & l^{1+1/2+1/4+\ldots}\cdot 2^{(1/2+1/4+\ldots)\log h(l)\log l}\leq 2^{(2+\log h(l))\log l}.
\end{align*}

Now we construct the set $A$ in the standard manner: At stage $i=1,2,\ldots$ we choose $n$ so that no queries to $A$ with strings of length $\geq h(n)\log(n)$ have been made at previous stages, and such that $h(n)\log n>i^2\log h(n)\log n+2i\log n$, and $h(n)^i\geq h(n^i)$ (that $n$ can be chosen in this way follows from the unboundedness of $h$ and Lemma \ref{fct_lemma_eq}). Then we have:
\[
h(n)\log n>i^2 \log h(n)\log n+2i\log n\geq \log h(n^i)\log(n^i)+2\log(n^i)=(2+\log h(l))\log l.
\]
We then simulate $P_i^{A\oplus Q(A)}$ on input $0^n$, including all the nondeterministic computations that need to be simulated as a result of queries to $Q(A)$. Whenever $A$ is queried with a new string for the first time, the answer is negative (in particular, this applies to all strings of length $\geq h(n)\log(n)$); all other queries are answered consistently with previous answers. By our computed upper bound and the choice of $n$, we have that the number of strings $y\in\{0,1\}^{\lceil h(n)\log(n)\rceil}$ that will be queried throughout the simulation is $<2^{h(n)\log n}$. If $P_i^{A\oplus Q(A)}(0^n)$ terminates and rejects, we place an unqueried string $y\in\{0,1\}^{\lceil h(n)\log n\rceil}$ into $A$; if the computation terminates and accepts, we place no strings of this form into $A$. Thus, if the computation terminates, it will incorrectly decide whether $0^n\in L$. Since every polynomial-time TM appears infinitely many times in the list, $n^i$ will eventually be sufficiently large for any given machine to terminate. We conclude that $L\notin \mathsf{P}(A\oplus Q(A))$.
\end{proof}
\begin{proof}[Proof of Corollary \ref{wp_fpt_cor}]
Take $B$ to be the oracle from Theorem \ref{wp_fpt_thm}. $\mathsf{W}[1](B)\subseteq \mathsf{A}[1](B)\subseteq \mathsf{W[P]}(B)$ is obvious.

Although each level of the $\mathsf{W}$-Hierarchy is a subset of $\mathsf{W[P]}$, it is not immediately obvious (at least, not from the way this inclusion is usually proved, namely via reductions), that the inclusion also holds in the presence of oracles. For a fixed $t\geq 2$, consider a $\mathsf{W}[t]$-machine $W$ with unrestricted access to $B$ such that, on input $(x,k)$, $W$ runs in time $f(k)p(|x|)$ with nondeterministic guess instructions only among the last $h(k)$ instructions, for some computable functions $f,h$, and a polynomial function $p$. In the nondeterministic phase of a computation, such a machine can make some number of existential guesses, followed by $t-1$ blocks of instructions, each containing at most $c$ nondeterministic guess instructions of a single type (existential or universal), where $c$ is a constant that depends only on the machine, and is independent of the input. A $\mathsf{W[P]}(B)$-machine can simulate such a computation by making the same oracle queries at any given point in the computation, using its own existential guesses to simulate the first block of existential guesses of $W$, and simulating the remainder of the computation for every possible outcome of the remaining nondeterminstic guesses of $W$. This produces an overhead of $O((f(k)p(|x|))^{c(t-1)})$ in the running time of the simulating machine, but since $c$ and $t$ are constants, the dependence of this term on $|x|$ is still polynomial, and hence the oracle $\mathsf{W[P]}$-machine runs in fpt-time. (Note that for this simulation, it is essential that the oracle access of the $\mathsf{W[P]}(B)$-machine be unrestricted (and in particular not parameter-bounded), since it may need to make a number of oracle queries that depends on $|x|$, because it is simulating every possible computational path after the first block of existential guesses of $W$.) We conclude that $\mathsf{W}[t](B)\subseteq \mathsf{W[P]}(B)$, for all $t\geq 2$.

Finally, except for $\mathsf{A}[1]$, the levels of the $\mathsf{A}$-Hierarchy are not known to be $\subseteq \mathsf{W[P]}$. However, a close look at the proof that $\mathsf{FPT}=\mathsf{W[P]}\Rightarrow \mathsf{FPT}=\mathsf{A}[1]=\mathsf{A}[2]=\ldots$ (see \cite{fg}), reveals that it still holds even if the machines have unrestricted oracle access to $B$. We therefore also have that for all $t\geq 2$, $\mathsf{FPT}(B)=\mathsf{W[P]}(B)\Rightarrow \mathsf{A}[t](B)=\mathsf{FPT}(B)$.
\end{proof}

\subsection*{Proofs of theorems in Section \ref{wh_ah_sec}}
For the theorems in this section, all complexity classes are defined in terms of random access machines, instances of problems and oracles query instances are encoded as sequences of positive integers, and oracles are parameterized.
\begin{proof}[Proof of Theorem \ref{wh_ah_sep_thm}]
Let $L(O)$ be the parameterized problem defined as follows:
\begin{displaymath}
\{(0^n,k)\mid n,k\in\mathbb{N},k\leq n,\textrm{ and }\forall i_1,\ldots,i_k\in [n]:((i_1,\ldots,i_k),k)\in O\}.
\end{displaymath}
A $\mathsf{co\textrm{-}A}[1]^O$-machine, whose oracle access is tail-restricted, can decide $L(O)$ by simply guessing a query instance of the appropriate form (a vector of $k$ integers from $[n]$, with parameter value $k$), querying the oracle, and accepting if and only if the answer is `yes'.

We show how a finite part of the oracle $O$ can be defined so that a $\mathsf{para\textrm{-}NP}$-machine with known running time bounds does not decide $L(O)$. Let $M$ be an oracle $\mathsf{para\textrm{-}NP}$-machine which on input $(x,k)$ runs in time $f(k)(|x|+k)^c$, where $f$ is a computable function and $c\geq 1$ is a constant. $M$ can make nondeterministic guesses and query $O$ throughout the computation. Let $k>c$ be fixed, and let $n\in\mathbb{N}$ be such that $n^k>f(k)(n+k)^c$. The machine is run on input $(0^n,k)$. Initially, all queries to vectors from the set $[n]^{k}$, on all computational paths, are answered affirmatively, while queries on any instance not of this form are answered in a well-defined way (which will be specified later). If $M$ rejects on all computational paths, we place all vectors from $[n]^k$ (with parameter value $k$) into $O$, making $(0^n,k)$ a `yes'-instance of $L(O)$. If, on the other hand, $M$ accepts the input on some computational path, it will do so without having queried all $n^k$ relevant vectors in on this path, and we can remove from $O$ one of the instances of this form that has not been queried, without changing the fact that $M$ nondeterministically (in an existential sense) accepts the input. But this change to $O$ makes $(0^n,k)$ a `no'-instance.

In order to apply the above procedure to all $\mathsf{para\textrm{-}NP}$-machines, we computably list all valid RAM programs and simulate each of them repeatedly for a bounded number of steps. The inputs on which a machine $M$ is run for its $i$-th simulation, are chosen as a computable function of $M$ and $i$, in such a way that for infinitely many $k\in \mathbb{N}$, $M$ is run on infinitely many inputs $(0^n,k)$ (in other words, so that $k$ grows arbitrarily large, and for infinitely many values of $k$, $n$ also grows arbitrarily large). For each new simulation of the timed computation of a machine $M$, we choose the parameter value $k\in\mathbb{N}$ as mentioned above (computably in terms of $M$ and the number of the simulation), and set $n=1+\max\{n',n'',k\}$, where $n'$ is the largest value such that $(0^{n'},k')$ was the input for some previous simulation (for any machine and any $k'$), and $n''$ is the largest value for which an oracle query of the form $(v,k'')$, with $v\in\mathbb{N}^{n''},k''\in\mathbb{N}$, was made during some simulation up to this point. $M$ is then simulated on input $(0^n,k)$ for $n^{k}$ steps, and all queries not of the form $(v,k)$, $v\in [n]^{k}$, are answered consistently with previous simulations, or with `yes' if they are first-time queries. If $M$ is indeed a $\mathsf{para\textrm{-}NP}$-machine, then for some computable function $f$ and constant $c\geq 1$, it runs in time $f(k)(|x|+k)^c$ and, for a sufficiently large values $n$ and $k$, we have $n^{k}>f(k)(n+k)^c$, and can therefore apply the procedure described in the previous paragraph to ensure that the machine nondeterministically accepts if and only if $(0^n,k)\notin L(O)$.
\end{proof}

\begin{proof}[Proof of Theorem \ref{wh_ah_sep_thm_2}]
For this separation we once again need an oracle whose elements are strings of non-negative integers. We sketch the proof for $t=2$, but it is easy to see that it generalizes for larger values of $t$.

We want to construct an oracle $O$ such that $\mathsf{W}[2]^{O}\not\subseteq \mathsf{A}[1]^{O}$, where both machines have tail-restricted access to the oracle (recall that this means that the $\mathsf{W}[2]$-machine's oracle registers are write-only, but that it can copy values from its guess registers there).

Given an oracle $O\subset\mathbb{N}^\ast\times \mathbb{N}$, we define:
\[L(O):=\{(0^n,k)\mid \exists x\in[n]^k\textrm{ s.t. }\forall y\in[n]: (xy,k)\in O\}.
\]

Clearly, $L(O)\in\mathsf{W}[2]^O$ for any oracle $O$. To construct an oracle relative to which this language is not in $\mathsf{A}[1]$, we once again use a construction in stages, where we simulate every (existential) nondeterministic RAM on instances of the form $(0^n,k)$ for $n^k$ steps, with increasing values of $k$ and $n$, such that for each of infinitely many values of $k$, the machine is simulated on input $(0^n,k)$ for infinitely many values of $n$. This means that if a particular RAM is in fact an $\mathsf{A}[1]$-machine, then, for some constant $c>1$ and computable functions $f$ and $h$, on input $(0^n,k)$ it will run in time $f(k)(n+k)^c$, making nondeterministic guesses and oracle queries only among the last $h(k)$ steps of the computation. Thus, for $k>c$, $n>h(k)$, and $n^k>f(k)(n+k)^c$, the machine will halt, and make fewer than $n$ queries on any one computation path. The remainder of the argument is as in the proof of Theorem \ref{wh_ah_sep_thm}: All new queries are answered with `yes', and if a particular machine terminates and rejects on all paths, then we place $(xy,k)$ for some $x\in[n]^k$ and all $y\in[n]$ into $O$ (since the $\mathsf{A}[1]$-machine has rejected despite all queries being answered with `yes', adding more instances to the oracle can not cause the machine to reject). On the other hand, if the machine terminates and accepts, we fix a single accepting computation path of the machine on the given input, and remove for \emph{each} $x\in[n]^k$ some string $xy$ that was \emph{not} queried on this accepting computation path. Thus, we ensure that for sufficiently large $n$, every $\mathsf{A}[1]$-machine will have one instance where it gives the incorrect answer.
\end{proof}

\end{document}